\theoremstyle{plain}
        \newtheorem{theorem}{Theorem}[section]
        \newtheorem{proposition}[theorem]{Proposition} 
        \newtheorem{lemma}[theorem]{Lemma}
        \newtheorem{corollary}[theorem]{Corollary} 
        \newtheorem{definition}[theorem]{Definition} 
        \newtheorem{remark}[theorem]{Remark}  
\numberwithin{equation}{section} 
\newcommand \loc   {\text{loc}}
\newcommand \Ric 		{\textbf{Ric}}  
\newcommand \Riem 		{\textbf{Riem}}  
\newcommand \Tor        {\textbf{T}} 
\newcommand \RR 		{\mathbb{R}}  
\newcommand \del 		\partial
\newcommand \eps 		\epsilon
\newcommand \lam 		\lambda 
\newcommand \be 		{\begin{equation}}
\newcommand \ee 		{\end{equation}}
\newcommand \utR		{{\undertilde{R}}}
\newcommand \Vect      	{{\text{Vect}}}
\newcommand \Lcal 	 	{\mathcal L}
\newcommand \Dcal 	 	{\mathcal D}
\newcommand \Ccal 	 	{\mathcal C}
\newcommand \TT 	 	{\mathfrak{T}}
\newcommand \X   	 	{ T_0^1}
\newcommand \XX 	 	{ \TT_0^1}
\newcommand \Y   	 	{ T_1^0}
\newcommand \YY 	 	{ \TT_1^0}
\newcommand \MM 	 	{\mathcal M}
\newcommand \HH 	 	{\mathcal H}
\newcommand \tX  		 {{\widetilde X}\hskip.05cm} 
\newcommand \tl   		{{\widetilde \ell}\hskip.05cm}
\newcommand \tg   		{{\widetilde{g}} \hskip.05cm}
\newcommand \uttheta   	{{\undertilde{\theta}} \hskip.05cm}
\newcommand \utn  		{{\undertilde{n}} \hskip.05cm}
\newcommand \utg   	{{\undertilde{g}} \hskip.05cm}
\newcommand \otg   	{{\widetilde{g}} \hskip.05cm}
\newcommand \utX   	{{\undertilde X}\hskip.05cm} 
\newcommand \utl   		{{\undertilde \ell}\hskip.05cm} 
\newcommand \ttheta   	{{\widetilde{\theta}} \hskip.05cm}
\newcommand \tn  		{{\widetilde{n}} \hskip.05cm} 
\newcommand \utnabla  	{{\undertilde{\nabla}} \hskip.05cm} 
\newcommand \otnabla  	{{\widetilde{\nabla}} \hskip.05cm} 
\newcommand \utRiem  	{{\undertilde{\textbf{Riem}}} \hskip.05cm}
\newcommand \ot  		{\widetilde}
\newcommand \ut 		{\undertilde}
\newcommand \la 		{\langle}
\newcommand \ra 		{\rangle} 
\newcommand \lad 		{\prec}
\newcommand \rad 		{\succ_{{\mathcal D}',{\mathcal D}}}
\newcommand \dM 	 	{{m}}
\newcommand \dH 	 	{{m-1}}
\newcommand \Dirac 		{{\boldsymbol \delta}}
\newcommand \Edual 		{E}
\begin{document}
\title[Lorentzian manifolds with distributional curvature]
{Definition and stability of Lorentzian manifolds 
\\
with distributional curvature}
\author
%   Galley Proofs :     "LeFloch"    is a single word. The third letter is capitalized. 
   [P.G. L{\tiny e}Floch \and C. Mardare] 
{Philippe G. L{\tiny e}Floch
 and Cristinel Mardare}
% \address
\date{}
\subjclass[2000]   {Primary : 35L65.     Secondary : 76L05, 76N}  
\keywords{Lorentzian manifold, connection, curvature, distribution, general relativity.
\newline 
\footnote""{{\em Address :} Laboratoire Jacques-Louis Lions \& Centre National de la Recherche Scientifique, 
Universit\'e de Paris 6, Place Jussieu, 75252 Paris, France. E-mail: 
{\tt LeFloch@ann.jussieu.fr, Mardare@ann.jussieu.fr}
\newline 
Published in : Port. Math. 64 (2007), 535--574.} 
}

\begin{abstract} 
Following Geroch, Traschen, Mars and Senovilla, we consider Lorentzian manifolds with 
distributional curvature tensor. Such manifolds represent spacetimes of general relativity
that possibly contain gravitational waves, shock waves, and other singular patterns. 
We aim here at providing a comprehensive and geometric (i.e., coordinate-free) framework.  
First, we determine the minimal assumptions required on the metric tensor 
in order to give a rigorous meaning to the spacetime curvature within the framework of distribution theory.
This leads us to a direct derivation of the jump relations associated with singular parts of 
connection and curvature operators.  
Second, we investigate the induced geometry on a hypersurface with general signature,  
and we determine the minimal assumptions required to define, in the sense of distributions,
the curvature tensors and the second fundamental form of the hypersurface and to establish 
the Gauss-Codazzi equations. 
\end{abstract}

\maketitle

% \tableofcontents  

%-----------------------------------------------------------------------------------------------------------------
\section{Introduction}
\label{IN}
 
Our main motivation for a study of Lorentzian manifolds with distributional curvature 
comes from general relativity: a spacetime is a $(3+1)$-dimensional differential manifold 
$\MM$ endowed with a Lorentzian metric $g$ with signature 
$(-, +,+,+)$, satisfying Einstein field equations (in normalized units) 
\be
G_{\mu\nu} = T_{\mu\nu},
\label{IN-Einstein}
\ee
where $G_{\mu\nu} := R_{\mu\nu} - (R/2) g_{\mu\nu}$ is Einstein's curvature tensor, 
$R_{\mu\nu}$ the Ricci curvature, $R$ the scalar curvature, and $T_{\mu\nu}$ 
the stress-energy tensor describing the matter content of the spacetime under consideration. 
Singular spacetimes having metric tensor with limited regularity are of particular importance 
in general relativity; many explicitly known solutions of \eqref{IN-Einstein} exhibit black holes, 
gravitational waves, shock waves, or other singular features.  
For instance, the metric can be smooth everywhere except on a smooth hypersurface ${\HH \subset \MM}$
across which the curvature tensor suffers a jump discontinuity; such a hypersurface is interpreted physically 
as a gravitational wave propagating in the spacetime. 
Recall also that, according to Penrose and Hawking incompleteness theorems, 
spacetimes are sought to be generically singular \cite{Rendall2,Wald}. 

Our aim in the present paper is to investigate the local properties of singular spacetimes 
and of their hypersurfaces, within the theory of distributions. 
Although this issue has already been addressed extensively 
\cite{BarrabesIsrael,ClarkeDray,GerochTraschen,MarsSenovilla,Penrose,Raju,SmollerTemple,Taub},  
it appears that, in the mathematical literature, 
no comprehensive discussion of the minimal assumptions required to give a rigorous 
meaning to the curvature in theory of distributions is currently available. The approach we propose follows 
earlier pioneering work by Geroch and Traschen \cite{GerochTraschen} and by Mars and Senovilla \cite{MarsSenovilla}, 
but especially aims at providing a comprehensive and fully geometric exposition. 
That is, we avoid any reference to specific coordinate charts on the manifold, and we provide 
a direct and natural derivation of singular parts of curvature tensors.   

More precisely, a $\Ccal^\infty$-differentiable $m$-dimensional manifold $\MM$ 
being fixed, we seek for the minimal regularity required on a metric tensor $g$ 
defined on $M$, in order to rigorously define (as distributions) 
the connection operator $\nabla$ and the curvature tensor $\Riem$ associated with this metric. 
The same question arises when a connection $\nabla$ is prescribed on $M$ and we attempt to define its curvature. 
To study the geometric properties of a differentiable manifold endowed with a 
non-smooth metric or connection, the proper functional framework is that of distributions.  
We introduce below several definitions of distributional 
metric, connection, and curvature and, under various assumptions, 
we discuss the (weak or strong) stability properties of sequences of distributional 
metrics, connections, or curvatures. 

Our presentation allows us to derive 
jump relations for the singular parts of these quantities, once they are viewed as distributions. 
In Section~\ref{PG-1} we investigate the situation that a connection is provided on the manifold, 
and in Section~\ref{geometry-metric} we consider the case of a metric tensor. The signature of the 
metric is irrelevant for this first part. 

In a second part, in Sections~\ref{HY-0} and \ref{CU-0}, 
we turn our attention to hypersurfaces $\HH$ within a Lorentzian manifold $\MM$, 
when the prescribed metric (or connection) typically suffers a jump discontinuity across $\HH$. 
Our discussion applies to hypersurfaces with general signature, which are not globally timelike, 
spacelike, or null but may change type from point to point. Hence, the hypersurface can be locally 
Riemannian, Lorentzian, or degenerate, and it is important to carefully distinguish between various 
geometric objects defined in $\MM$ which may, or may not, have traces on the hypersurface.  
We discuss the nature and regularity of the geometry induced on the hypersurface by the geometry
of the ambiant spacetime. 

On one hand, a connection being given in the manifold together with a ``rigging field'' on the hypersurface $\HH$ 
(see Section~\ref{HY-0}), we determine an induced connection on $\HH$, denoted below by $\utnabla$. 
On the other hand, a second concept of induced connection on $\HH$, 
denoted by $\otnabla$, can be defined when the connection $\nabla$ is the Levi-Cevita connection 
of a given metric. We observe that 
the connection $\utnabla$ arises as a more natural concept, as was recognized in \cite{MarsSenovilla}. 

The material presented in the present paper should find applications in several directions. 
One one hand, based on the jump relations derived in this paper, 
one should construct a large class of singular vacuum spacetimes containing impulsive gravitational waves. 
The metrics satisfy here the Einstein equations \eqref{IN-Einstein}
which impose further constrains beyond the geometric ones on the nature of the discontinuities. 
Following the approach in \cite{BLSS,CaciottaNicolo,Rendall}, such spacetimes are obtained by solving 
a characteristic-value problem for the Einstein equations with initial data prescribed on a hypersurface.  
Singular matter spacetimes containing 
gravitational waves and shock waves have been recently also constructed
by solving the Einstein-Euler equations for Gowdy symmetric spacetimes \cite{BLSS,LeFlochStewart}. 

On the other hand, in the context of numerical relativity, the formulation of suitable boundary conditions 
\cite{FriedrichNagy,FriedrichRendall,ReulaSarbach} is an important issue, 
and the analysis in the present paper should be relevant to handle boundary with general signature. 
Recall that various excision methods have been devised in the literature to attempt to cut out of the 
numerical domain the black hole regions which, in principle, should not influence the regular part of 
the spacetime. However, many difficulties arise with such techniques at both the theoretical and the numerical levels, 
and further research is necessary to ensure the nonlinear stability of such numerical methods.

%--------------------------------------------------------------------------------------------------------------------
\section{Preliminaries}  
\label{PG-0}

\subsection{Tensors and integration on a differentiable manifold}
\label{basics} 

Throughout this paper, $\MM$ denotes a connected, oriented, $\Ccal^\infty$ differentiable $\dM$-manifold. 
The tangent and cotangent spaces at $x \in \MM$ are denoted by $T_x\MM$ and $T_x^\star\MM$, 
and the corresponding bundles by $T\MM := \bigcup_{x \in M} T_x\MM$ and 
$T^\star\MM := \bigcup_{x \in \MM} T_x^\star \MM$, respectively; 
the action of a covector (or $1$-form) $\omega$ on a vector $X$ is denoted by $\la \omega, X\ra$. 
The bundle of all ($p$-contravariant and $q$-covariant) $(p,q)$-tensors is denoted by 
$T_q^p\MM := \bigcup_{x \in \MM} T_{q,x}^p \MM$, and 
is canonically endowed with a structure of $\Ccal^\infty$-differentiable manifold. 
A $(1,0)$-tensor field is identified with a vector field $X$ on $\MM$, that is, 
$X_x \in T_x\MM$ for all $x\in\MM$. 

We denote by $\Lambda^k(\MM)$ the bundle of differential forms of order $k \leq m$, that is,
$(0,k)$-tensor fields that are anti-symmetric with respect to any pair of variables. 
Clearly, $\Lambda^k(\MM)\subset T_k^0(\MM)$, with 
$\Lambda^0(\MM)=T_0^0(\MM)$ (the space of functions on $\MM$) and $\Lambda^1(\MM)=\Y(\MM)$. 
 
We introduce the space $\Ccal^\infty(\MM)$ consisting of all $\Ccal^\infty$-differentiable functions 
$f:\MM \to \RR$  and, more generally, the space $\Ccal^\infty T_q^p(\MM)$ of all 
$\Ccal^\infty$-sections $S: x \in \MM \mapsto S_x \in T_{q,x}^p\MM$. The following short-hand notation will 
also be used 
$$
\TT_q^p(\MM) := \Ccal^\infty T_q^p(\MM). 
$$
Similarly, the spaces of compactly supported $\Ccal^\infty$-functions, tensor fields, and 
differential forms 
will be denoted by $\Dcal(\MM)$, $\Dcal T_q^p(\MM)$, and $\Dcal\Lambda^k(\MM)$, respectively. 
  
Non-smooth tensor fields will be also useful. We will consider tensor fields in the Lebesgue and Sobolev spaces  
$L^r_\loc T_q^p(\MM)$ and $W^{k,r}_\loc T_q^p(\MM)$ ($k,r\geq 1$)
consisting of all $(p,q)$-tensors whose $r$-powers are locally integrable on $\MM$
or belong to the corresponding Sobolev space of order $k$. 
This regularity can be checked in any system of local coordinates, and 
it is important to realize that, although they are unambiguously defined from the sole $\Ccal^\infty$ differentiable 
structure of the manifold $\MM$, all these spaces of tensor fields {\sl are not} endowed with canonical norms.  

We will also consider sequences of tensor fields. A sequence of tensors $A^{(n)}\in 
W^{k,r}_\loc T_q^p(\MM)$ is said to converge in the strong (weak, respectively) $W^{k,r}_\loc$ topology to 
some limit tensor field $A^{(\infty)}\in W^{k,r}_\loc T_q^p(\MM)$  
if for all $X_{(i)}\in\XX(\MM)$ and $\theta^{(j)}\in\YY(\MM)$,
the sequence of functions 
$$
A^{(n)}(X_{(1)},\ldots,X_{(p)},\theta^{(1)},\ldots,\theta^{(q)})  
\to 
A^{(\infty)}(X_{(1)},\ldots,X_{(p)},\theta^{(1)},\ldots,\theta^{(q)})
$$
converges in the strong (weak, resp.) $W^{k,r}_\loc(\MM)$ topology.
This definition is equivalent to the convergence of the components of $A^{(n)}$ in any chosen coordinate atlas.

Observe also that, since the manifold $\MM$ is oriented, the integral of $\dM$-forms is well-defined on
open sets, and Stokes formula 
\be
\label{Stokes}
\int_{\MM'} d \omega = \int_{\del\MM'} \omega
\ee 
holds for all open set $\MM' \subset \MM$ with smooth boundary $\del M'$ and 
for any $(\dH)$-form $\omega$. Here, $d$ denotes the operator of exterior differentiation. 

Recalling the interior product $i_X \omega$ 
defined for all $k$-forms $\omega$ and vectors $X,Z_{(1)}, \ldots, Z_{(k)}$ by
$$
(i_X\omega)(Z_{(1)}, \ldots, Z_{(k)}) := \omega(X,Z_{(1)}, \ldots, Z_{(k)}), 
$$  
we can express the Lie derivative $\Lcal_X$ of a vector field $X$ as 
$$
\Lcal_X=di_X+i_Xd. 
$$ 
Then, writing 
$$
\aligned 
(Xf) \, \omega 
& = \Lcal_X(f \, \omega) - f \, \Lcal_X\omega
\\
& = d(i_X(f \, \omega)) - f \, \Lcal_X\omega, 
\endaligned 
$$ 
and using \eqref{Stokes} we obtain the formula 
\be
\label{formula-div}
\int_{\MM'} (Xf) \, \omega = \int_{\del \MM'} f \, i_X \omega - \int_{\MM'} f \, \Lcal_X \omega 
\ee
for all smooth vector fields $X$, functions $f$, and $m$-form fields $\omega$. 

%-----------------------------------------------------------------------------------------------------------------
\subsection{Distributions on a differentiable manifold}
\label{distributions}

We now introduce the notion of tensor distributions on a manifold. It is convenient to 
define first: 

\begin{definition}
The space of {\rm scalar distributions} $\Dcal'(\MM)$ is the dual 
of the space $\Dcal\Lambda^m(\MM)$ of all compactly supported {\rm densities}. 
The space of {\rm distribution densities} $\Dcal' \Lambda^m(\MM)$ is the dual of the space 
$\Dcal(\MM)$ of compactly supported {\rm functions}.
\end{definition}

The duality bracket between a scalar distribution $A \in \Dcal'(\MM)$ and a density $\omega \in \Dcal\Lambda^\dM(\MM)$ is 
written as $\lad A,\omega\rad$. In view of \eqref{formula-div}, we have 
$$
\int_\MM (Xf)\omega =- \int_\MM f\Lcal_X\omega, 
\qquad f\in\Ccal^\infty(\MM), \quad \omega\in\Dcal\Lambda^\dM(\MM),
$$ 
and 
it is natural to define action $X A$ of a smooth vector field $X\in \XX(\MM)$ on a scalar distribution 
$A \in \Dcal'(\MM)$ by using the Lie derivative, i.e.,    
$$
\lad XA, \omega\rad := - \lad A,\Lcal_X\omega\rad, \qquad \omega\in\Dcal\Lambda^\dM(\MM). 
$$ 

Observe that the space of locally integrable functions is {\sl canonically} 
embedded into the space of scalar distributions, that is, $f \in L^1_\loc(\MM) \mapsto f \in \Dcal'(\MM)$, 
via 
$$
\lad f,\omega\rad:=\int_\MM f\omega, \qquad \omega\in \Dcal\Lambda^\dM(\MM). 
$$

More generally, we define a {\sl $(p,q)$-tensor distribution} as a $\Ccal^\infty(\MM)$-multi-linear map 
$$
A : \underbrace{\XX(\MM)\times \ldots \times \XX(\MM)}_{p\text{ times}} \times 
        \underbrace{\YY(\MM) \times \ldots \times \YY(\MM)}_{q\text{ times}}\to \Dcal'(\MM),
$$
and denote the space of tensor distributions by $\Dcal' T_q^p(\MM)$. 
The space of locally integrable tensor fields $L^1_\loc T_q^p(\MM)$ is {\sl canonically} embedded into the space 
$\Dcal' T_q^p(\MM)$, that is, $A \in L^1_\loc T_q^p(\MM) \mapsto A \in \Dcal' T_q^p(\MM)$ 
via 
$$
\lad A(X_{(1)},\ldots,X_{(p)},\theta^{(1)},\ldots,\theta^{(q)}),\omega\rad 
:= \int_\MM  A(X_{(1)},\ldots,X_{(p)},\theta^{(1)},\ldots,\theta^{(q)}) \, \omega, 
$$
for all $\omega \in \Dcal \Lambda^m(\MM)$, $X_{(1)},\ldots,X_{(p)} \in \XX(\MM)$ and 
$\theta^{(1)},\ldots,\theta^{(q)} \in \YY(\MM)$,

We will also consider limits of sequences of distributions. 
A sequence of $(p,q)$-tensor distributions $A^{(n)}$ is said to 
converge in the distribution sense to a limit $A^{(\infty)}$ if, in $\Dcal'(\MM)$, 
$$
A^{(n)}(X_{(1)},\ldots,X_{(p)},\theta^{(1)},\ldots,\theta^{(q)})
\to 
A^{(\infty)}(X_{(1)},\ldots,X_{(p)},\theta^{(1)},\ldots,\theta^{(q)})  
$$
for all $X_{(i)}\in\XX(\MM)$ and $\theta^{(j)}\in\YY(\MM)$. 
In other words, $A^{(n)}$ converges in the distribution sense if all of its components 
(which are scalar distributions) in any given coordinate atlas converge in the sense of (scalar) distributions. 

If $A \in \Dcal' T_q^p(\MM)$ and $f \in \Ccal^\infty(\MM)$, or else 
if $A \in \TT_q^p(\MM)$ and $f \in \Dcal'(\MM)$, we define the product of $f$ and $A$ as a distribution, 
by setting 
$$
(f \, A)(X_{(1)},\ldots,X_{(p)},\theta^{(1)},\ldots,\theta^{(q)}) 
= f \, A(X_{(1)},\ldots,X_{(p)},\theta^{(1)},\ldots,\theta^{(q)}). 
$$
for all $X_{(i)}\in\XX(\MM)$ and $\theta^{(j)}\in\TT_0^1(\MM)$.

Finally, given a smooth tensor field $A \in \TT_q^p(\MM)$, we can define its extension
$$
A : \Dcal' \X(\MM)\underbrace{\times\XX(\MM)\times\ldots\times\XX(\MM)}_{(p-1)
\text{ times}}\times\underbrace{\YY(\MM)\times\ldots\times\YY(\MM)}_{q\text{ times}}\to\Dcal'(\MM)
$$
by setting
\be\label{distrib-slot}
A(Y,X_{(2)},\ldots,X_{(p)},\theta^{(1)},\ldots,\theta^{(q)})
:= \la Y, A(\cdot ,X_{(2)},\ldots,X_{(p)},\theta^{(1)},\ldots,\theta^{(q)})\ra
\ee
for all $Y\in \Dcal' \X(\MM)$, $X_{(i)}\in\XX(\MM)$, and $\theta^{(j)}\in\YY(\MM)$.
Here, the term 
$$
A(\cdot ,X_{(2)},\ldots,X_{(p)},\theta^{(1)},\ldots, \theta^{(q)})\in\YY(\MM)
$$
 is the $1$-form field defined by
$$
X \in \XX(\MM) \mapsto A(X,X_{(2)},\ldots,X_{(p)},\theta^{(1)},\ldots, \theta^{(q)})\in\Ccal^\infty(\MM).
$$
Extensions corresponding to other slots are defined similarly.

%------------------------------------------------------------------------------------------------------------

\section{Distributional curvature associated with a connection} 
\label{PG-1}  

\subsection{Distributional connections}
\label{distrib-connections}
 
We begin now our investigation of connections and metrics with limited regularity,
and we consider first a general notion of connection operator $(X,Y)\mapsto \nabla_X Y$
defined in the distribution sense. This is a very general concept for which $\nabla_X Y$ is only a 
distribution vector field.

\begin{definition}\label{covar} An operator 
$\nabla: \XX(\MM) \times \XX(\MM) \to \Dcal' \X(\MM)$
is called a {\rm distributional connection} if it satisfies the linearity and Leibnitz properties
$$
\aligned 
& \nabla_{f X + X'} Y = f \nabla_X Y + \nabla_{X'}Y, 
\\
& \nabla_X(Y+Y')= \nabla_X Y + \nabla_X Y', 
\qquad \nabla_X (fY) = f\nabla_X Y + (Xf)Y, 
\endaligned 
$$
for all $f\in\Ccal^\infty(\MM)$ and $X,X',Y,Y'\in \XX(\MM)$. 
\end{definition}

Observe that such a distributional connection can be extended to act on tensors of general order.
Namely, for functions we trivially define the operator (still denoted by the same symbol) 
$\nabla:\XX(\MM)\times \Ccal^\infty(\MM) \to \Ccal^\infty(\MM)\subset\Dcal'(\MM)$ 
by
$$
\nabla_X f:=X f, \qquad f\in\Ccal^\infty(\MM), \quad X \in \XX(\MM). 
$$
For $1$-form fields we introduce the operator $\nabla:\XX(\MM)\times \YY(\MM) \to \Dcal' \Y(\MM)$ by 
$$
\aligned 
& \prec \nabla_X \theta,Z\succ := X (\la\theta,Z\ra) - \la \theta,\nabla_X Z\ra, 
\\
& Z \in \XX(\MM), \quad X \in\XX(\MM), \quad \theta\in\YY(\MM). 
\endaligned 
$$
Thanks to \eqref{distrib-slot}, the term $\la \theta, \nabla_X Z \ra$ 
above is well-defined as a {\sl scalar} distribution.

Finally, we introduce operator $\nabla:\XX(\MM)\times \TT_q^p(\MM) \to \Dcal' T_q^p(\MM)$, 
defined for $X \in \XX(\MM)$ and $T\in\TT_q^p(\MM)$ by
$$
\aligned
& (\nabla_X T)(Z_{(1)},\ldots,Z_{(p)},\theta^{(1)},\ldots,\theta^{(q)})
\\
& :=X(T(Z_{(1)},\ldots,Z_{(p)},\theta^{(1)},\ldots,\theta^{(q)}))\\
& \quad -\sum_{i=1}^p T(Z_{(1)},\ldots,Z_{(i-1)},\nabla_XZ_{(i)},Z_{(i+1)},\ldots,Z_{(p)},\theta^{(1)},\ldots,\theta^{(q)})\\
& \quad -\sum_{j=1}^q T(Z_{(1)},\ldots,Z_{(p)},\theta^{(1)},\ldots,\theta^{(j-1)},\nabla_X\theta^{(j)},\theta^{(j+1)},\ldots,\theta^{(q)})
\endaligned
$$
for all $Z_{(i)} \in \XX(\MM)$ and $\theta^{(j)}\in\YY(\MM)$. Again, we observe that, 
 thanks to \eqref{distrib-slot},  
the last two terms above are well-defined as distributions in $\Dcal'(\MM)$.

In consequence, a distributional connection enjoys all of the linearity and Leibnitz properties
of smooth connections. 

However, the interest of Definition~\ref{covar} is probably limited by the fact 
that no curvature tensor can be associated with a general distributional connection. 
This is due to the impossibility to multiply two general distributions, operation that is essential 
in defining the curvature.
Indeed, a distributional connection $\nabla$ does not allow one to compute second-order covariant derivatives: 
the vector field $\nabla_XZ$ is only a distribution, {\sl even} if $X,Z$ are smooth vector fields; hence,  
the term $\nabla_X(\nabla_Y Z)$ does not make sense. In coordinates, 
we schematically have 
$$
\Riem = \del \Gamma + \Gamma \star \Gamma, 
$$
where $\Gamma$ stands for Christoffel symbols of the connection $\nabla$. The quadratic products in 
the term $\Gamma \star \Gamma$ can not be defined in the distribution sense.

%------------------------------------------------------------------------------------------------------------
\subsection{The class of $L_\loc^2$ connections} 
\label{L2-connections}

To identify the connections that do admit a curvature tensor we now restrict attention to 
less singular connections.  
Consider an arbitrary distributional connection $\nabla$ satisfying 
$\nabla_X Y\in E \X(\MM)$ for all $X,Y\in\XX(\MM)$, where $E \X(\MM)$ is some 
subspace of $\Dcal' \X(\MM)$ (to be specified shortly). 
To give a meaning to its curvature tensor we must compute second-order derivatives of vector fields, 
that is, terms like $\nabla_X\nabla_Y Z$.  
Since $\nabla_Y Z$ belongs to $E\X(\MM)$, we first extend the operator $\nabla$ 
to the larger space $\XX(\MM) \times E\X(\MM)$. 
Such an extension, say $\nabla : \XX(\MM) \times E\X(\MM)\to \Dcal'\X(\MM)$, should naturally 
be defined by the formula 
\be
\label{relatio}
\la\nabla_X V,\theta\ra = X\la V,\theta\ra -\la V,\nabla_X \theta\ra
\qquad \text{ in } \Dcal'(\MM),
\ee
for $\theta\in \YY(\MM)$, $V\in E\X(\MM)$, and $X\in\XX(\MM)$. Under our assumptions, 
we solely have $\nabla_X \theta\in E\Y(\MM)$ and $V\in E\X(\MM)$
and, therefore, we see that the term $\la V, \nabla_X \theta\ra$ can be defined as a distribution 
only if $E$ is (a subspace of) $L^2_\loc$. 

This discussion leads us to:  

\begin{definition}
\label{ext-nabla} 
A distributional connection $\nabla$ is called an {\rm $L^2_\loc$ connection} 
if $\nabla_XY\in L^2_\loc\X(\MM)$ for all $X,Y\in\XX(\MM)$. 
The extension of such a connection is the operator $\nabla:\XX(\MM)\times L^2_\loc\X(\MM)\to \Dcal'\X(\MM)$ defined by 
$$
\la\nabla_X Y,\theta\ra  = X(\la Y,\theta\ra) - \la Y,\nabla_X\theta \ra
\quad \text{ in } \Dcal'(\MM),
$$
for $X\in\XX(\MM)$, $Y \in L^2_\loc\X(\MM)$, and $\theta\in \Dcal\Y(\MM)$.    
\end{definition}
 
When the conditions in the definition hold, we write in short $\nabla \in L^2_\loc(\MM)$;  
according to Definition~\ref{ext-nabla} one can then compute 
the covariant derivative of an $L^2$ vector field and, in turn, 
compute the curvature of $\nabla$.

\begin{definition} 
\label{curva}  
The {\rm distributional Riemann curvature} tensor of an $L^2_\loc$ connection $\nabla$  
is the tensor distribution $\Riem:\XX(\MM)\times \XX(\MM)\times \XX(\MM) \to \Dcal'\X(\MM)$ 
defined 
for $\theta\in \YY(\MM)$ and $X,Y,Z \in \XX(\MM)$ by  
$$
\aligned
\la \Riem(X,Y)Z,\theta\ra 
& = X\la\nabla_Y Z, \theta\ra -Y\la\nabla_X Z, \theta\ra
\\
& \quad -\la\nabla_Y Z, \nabla_X\theta\ra+\la\nabla_X Z, \nabla_Y\theta\ra-\la\nabla_{[X,Y]} Z, \theta\ra 
\endaligned
$$
as an equality in $\Dcal'(\MM)$. 
\end{definition}

Provided each term is understood in the distribution sense as explained above, 
we can also write the standard formula 
$$
\Riem(X,Y)Z := \nabla_X \nabla_Y Z - \nabla_Y \nabla_X Z - \nabla_{[X,Y]} Z.  
$$
We then introduce:

\begin{definition} 
\label{ricci}
The {\rm distributional Ricci curvature} tensor associated with an $L^2_\loc$ connection $\nabla$ 
is the tensor distribution $Ric:\XX(\MM)\times \XX(\MM) \to \Dcal'(\MM)$ 
defined for all for all $X,Y \in \XX(\MM)$ by 
$$
\Ric(X,Y):= \la E^{(\alpha)}, \Riem(X,E_{(\alpha)})Y\ra    \qquad \text{ in } \Dcal'(\MM), 
$$ 
where $E_{(\alpha)}$ ($\alpha=1, \ldots, m$) is an arbitrary local frame in the bundle $T\MM$ 
and $E^{(\alpha)}$ ($\alpha=1, \ldots, m$) is the 
corresponding dual frame. 
\end{definition}

The distributional curvature tensors defined above enjoy some important stability properties.  

\begin{theorem}[Stability under strong $L^2_\loc$ convergence]
\label{seq-conn}
Let $\nabla^{(n)}$ ($n=1,2,\ldots$) be a sequence of $L^2_\loc$ connections defined on $\MM$, 
and converging in the $L^2_\loc$ topology to some $L^2_\loc$ connection $\nabla^{(\infty)}$, 
$$
\nabla^{(n)}_XY\to \nabla^{(\infty)}_X Y 
\qquad \text{ strongly in } L^2_\loc
$$
for $X,Y \in \XX(\MM)$. Then, the distributional Riemann and Ricci curvature tensors 
$\Riem^{(n)}$ and $\Ric^{(n)}$ of the connections $\nabla^{(n)}$ converge in the distribution sense 
to the distributional curvature tensors $\Riem^{(\infty)}$ and $\Ric^{(\infty)}$ 
of the limiting connection $\nabla^{(\infty)}$, 
$$
\Riem^{(n)} \to \Riem^{(\infty)}, 
\qquad \Ric^{(n)} \to \Ric^{(\infty)}. 
$$
\end{theorem}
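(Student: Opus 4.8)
The plan is to argue directly from the defining formula in Definition~\ref{curva}, testing the five terms of $\la\Riem(X,Y)Z,\theta\ra$ one at a time against an arbitrary compactly supported density $\omega\in\Dcal\Lambda^m(\MM)$, and then to deduce the Ricci statement by contraction. Since distributional convergence is tested only against densities of compact support, I would first localize: covering the (compact) support of $\omega$ by finitely many frame neighborhoods and introducing a subordinate partition of unity, it suffices to work on a single neighborhood carrying a smooth local frame $E_{(\alpha)}$ with dual coframe $E^{(\alpha)}$. There the hypothesis $\nabla^{(n)}_XY\to\nabla^{(\infty)}_XY$ strongly in $L^2_\loc$ is applied with $X,Y$ ranging over the given fields $X,Y,Z$ and over the fixed smooth frame fields $E_{(\alpha)}$.

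The first, second and fifth terms $X\la\nabla_YZ,\theta\ra$, $-Y\la\nabla_XZ,\theta\ra$ and $-\la\nabla_{[X,Y]}Z,\theta\ra$ are routine. For the first, the action of the smooth field $X$ on a scalar distribution moves the derivative onto the test density, giving $-\int_\MM\la\nabla^{(n)}_YZ,\theta\ra\,\Lcal_X\omega$; since $\nabla^{(n)}_YZ\to\nabla^{(\infty)}_YZ$ in $L^2_\loc$ and $\theta$ is smooth, $\la\nabla^{(n)}_YZ,\theta\ra\to\la\nabla^{(\infty)}_YZ,\theta\ra$ in $L^2_\loc\subset L^1_\loc$, and the integral against the fixed compactly supported smooth density $\Lcal_X\omega$ passes to the limit. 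The second term is identical, and the fifth is the same with the fixed smooth field $[X,Y]$ in place of $X$.

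The two bilinear terms $\la\nabla_YZ,\nabla_X\theta\ra$ and $\la\nabla_XZ,\nabla_Y\theta\ra$ are the crux, and the only place where \emph{strong} (rather than weak) convergence is essential. I would first record that $\nabla^{(n)}_X\theta\to\nabla^{(\infty)}_X\theta$ strongly in $L^2_\loc\Y(\MM)$: from the defining relation $\prec\nabla_X\theta,E_{(\beta)}\succ=X(\la\theta,E_{(\beta)}\ra)-\la\theta,\nabla_XE_{(\beta)}\ra$, the first summand is smooth and independent of $n$ while the second converges in $L^2_\loc$ because $\nabla^{(n)}_XE_{(\beta)}\to\nabla^{(\infty)}_XE_{(\beta)}$ in $L^2_\loc$ and $\theta$ is smooth. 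Then $\la\nabla^{(n)}_YZ,\nabla^{(n)}_X\theta\ra$ is a pointwise contraction of two sequences each converging strongly in $L^2_\loc$, so the elementary bilinear estimate $\|a_nb_n-ab\|_{L^1(K)}\le\|a_n\|_{L^2(K)}\|b_n-b\|_{L^2(K)}+\|a_n-a\|_{L^2(K)}\|b\|_{L^2(K)}$ (Cauchy--Schwarz, using that convergent sequences are bounded in $L^2(K)$) shows the product converges in $L^1_\loc$; integrating against $\omega$ then yields convergence of these terms. This is exactly the step that fails under mere weak convergence, since products are not weakly continuous, and I expect it to be the main technical point of the argument. Combining the five terms gives $\la\Riem^{(n)}(X,Y)Z,\theta\ra\to\la\Riem^{(\infty)}(X,Y)Z,\theta\ra$ in $\Dcal'(\MM)$ for all smooth $X,Y,Z,\theta$, which is the asserted convergence of $\Riem^{(n)}$.

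The Ricci convergence is then immediate. By Definition~\ref{ricci} the scalar distribution $\Ric(X,Y)$ is the finite sum over $\alpha$ of $\la\Riem(X,E_{(\alpha)})Y,E^{(\alpha)}\ra$, and each summand is precisely an instance of the Riemann scalar distribution already treated, obtained by inserting the fixed smooth field $E_{(\alpha)}$ in the second slot, the given $Y$ in the third slot, and the fixed smooth coframe $E^{(\alpha)}$ as the test $1$-form. Hence each summand converges in $\Dcal'(\MM)$, and summing over the finitely many indices $\alpha$ gives $\Ric^{(n)}(X,Y)\to\Ric^{(\infty)}(X,Y)$ in $\Dcal'(\MM)$, completing the proof.
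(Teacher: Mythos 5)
Your proof is correct and takes essentially the same approach as the paper: the paper's own proof is a one-line assertion that the result follows from Definition~\ref{curva} and the key identity \eqref{relatio}, and your term-by-term verification—with the bilinear terms $\la\nabla_Y Z,\nabla_X\theta\ra$ handled by Cauchy--Schwarz, which is precisely where strong $L^2_\loc$ (rather than weak) convergence is needed—simply fills in the details the paper leaves implicit. The contraction argument for the Ricci tensor likewise matches the paper's intent.
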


\begin{proof} The desired convergence result follows from the above definitions of distributional curvature 
and on the key identity \eqref{relatio}. 
\end{proof}

%--------------------------------------------------------------------------------------------------------
\subsection{Jump relations}
\label{jump-connections}

Consider now the case that the connection $\nabla$ suffers a jump discontinuity 
along a smooth hypersurface $\HH\subset\MM$ and is smooth on both sides of it. 
Suppose that the hypersurface splits the manifold into two components, say  
$$
\MM = \MM^- \cup \MM^+, \qquad \MM^- \cap \MM^+ = \HH,  
$$
where $\MM^\pm$ are connected, $\Ccal^\infty$ differentiable manifolds with boundary. 
Denote by $T_x\HH$ and $T_x^\star\HH$ the tangent and cotangent spaces of $\HH$.
Assume that $\MM$ is endowed with a connection $\nabla$ of class 
$L^2_\loc(\MM^\pm) \cap W_\loc^{1,p}(\MM^\pm)$
for some $p \geq 1$
in each component $\MM^\pm$ up the boundary $\HH$, but suffers a jump discontinuity across $\HH$. 
This means that the restriction $(\nabla_X Y)^\pm$ of the vector field $\nabla_X Y$ to $\MM^\pm$ 
is of class $L^2_\loc \cap W_\loc^{1,p}$ in $\MM^\pm$ up to the boundary $\HH$. 
In consequence, the manifolds with boundary $\MM^\pm$ are naturally endowed 
with the connections $\nabla^\pm$ of class $L^2_\loc \cap W_\loc^{1,p}$ defined by 
$$
\nabla^\pm_{X^\pm}Y^\pm:=(\nabla_X Y)^\pm,  \qquad X^\pm,Y^\pm \in \XX(\MM^\pm). 
$$
Here, $X,Y\in\XX(\MM)$ are arbitrary (smooth) vector fields whose restrictions to $\MM^\pm$ 
coincide with $X^\pm,Y^\pm$. One can check that this identity defines $\nabla^\pm$ unambiguously and uniquely. 

Since the operators $\nabla^\pm$ are of class $L^2_\loc \cap W_\loc^{1,p}$, their Riemann and Ricci curvatures 
$\Riem^\pm$ and $\Ric^\pm$ are well-defined in $\MM^\pm$ up to the boundary $\HH$, 
and belong to $L^1_\loc(\MM^\pm)$, and even to $L^p_\loc(\MM^\pm)$ if $p \geq m/2$. 
If $X,Y,Z\in\XX(\MM)$ are smooth vector fields defined over the entire manifold $\MM$, we use the 
short-hand notation 
$$
\aligned
\nabla^\pm_XY   &:=\nabla^\pm_{X^\pm}Y^\pm,\\
\Riem^\pm(X,Y)Z  &:=\Riem^\pm(X^\pm,Y^\pm)Z^\pm,\\
\Ric^\pm(X,Y)    &:=\Ric^\pm(X^\pm,Y^\pm).
\endaligned
$$

Our aim is to rely on Definitions~\ref{curva} and \ref{ricci} 
and compute the {\sl distributional} curvature of the connection $\nabla$, 
which is defined over the entire manifold $\MM$. Observe that 
even if one considers smooth fields $X,Y$, the covariant derivative $\nabla_X Y$ is {\sl not} smooth. 
As observed earlier, in order to compute second-order covariant derivatives of smooth vector fields 
we must compute the covariant derivative of fields having the regularity of $\nabla_X Y$ only. 
The latter suffers a jump discontinuity across the hypersurface $\HH$, 
and we can anticipate that its covariant derivative will contain Dirac mass singularities along $\HH$.
 
Before we can state the corresponding formulas we introduce the following definition.

\begin{definition}
\label{dirac}
The Dirac measure supported by a smooth hypersurface $\HH\subset\MM$ is the $1$-form distribution 
$\Dirac_{\HH} \in \Dcal'\Y(\MM)$ defined by 
$$
\aligned 
& X \in \XX(\MM)\mapsto \la \Dirac_{\HH},X\ra\in \Dcal'(\MM), 
\\
& \lad \la \Dirac_{\HH}, X \ra, \omega \rad = \int_\HH i_X \omega, \qquad \omega \in \Dcal \Lambda^\dM(\MM). 
\endaligned 
$$ 
\end{definition}

\begin{remark} 
Observe that the distribution $\la \Dirac_\HH,X\ra$ depends on $X$ only via its restriction to the 
hypersurface $\HH$; 
therefore, the Dirac measure can be applied on vector fields $X$ that are {\sl only defined on $\HH$}. 
Moreover, if $X$ is a vector field tangent to the hypersurface ($X \in \TT^1_0(\HH)$), 
then the action of the Dirac measure on $X$ is trivial: $\la\Dirac_{\HH},X\ra=0$.
\end{remark}  

It will be convenient to consider a (locally defined, at least) frame of vector fields 
$E_{(\alpha),x}$, $\alpha= 1,\ldots,m$, {\sl adapted to the hypersurface} in the sense that 
$E_{(\alpha),x}$, $\alpha= 1,\ldots,m$ is a basis of $T_x\MM$ for all $x \in \MM$, 
while 
$E_{(i),x}$, $i=1, \ldots, m-1$ is a basis of $T_x \HH$ for all $x \in \HH$. 
Denote also by $E^{(\alpha)}_x$, $\alpha= 1,\ldots,m$ the corresponding dual frame of $1$-form fields, i.e., 
$$
\la E^{(\alpha)}, E_{(\beta)}\ra =  \delta^\alpha_\beta := \begin{cases} 0, & \alpha \neq \beta, 
\\
1, & \alpha = \beta. 
\end{cases}
$$
Greek and latin indices will always describe the range $1, \ldots,m$ and $1,$ $\ldots$, ${m-1}$, respectively.   

One more notation will be useful. We write $[ A ]_\HH$ for the jump of a tensor field $A$ across $\HH$, 
that is, with obvious notation, $[ A ]_\HH := A^+ - A^-$, while 
the ``regular part'' of a distribution tensor field $A$ is expressed as 
$$
A^{reg}:=
\begin{cases}
A^+ & \text{ in } \MM^+,
\\
A^- & \text{ in } \MM^-. 
\end{cases}
$$

\begin{theorem}[Jump relations associated with a singular connection] 
\label{PG-2} 
Let $\HH$ be a smooth hypersurface within a smooth manifold $\MM=\MM^- \cup \MM^+$
separated into two manifolds with boundary.  
Let $\nabla$ be a connection on $\MM$ satisfying  
$\nabla\in L^2_\loc(\MM)$ and $\nabla^\pm\in W^{1,p}_\loc(\MM^\pm)$ for some $p\geq 1$. 
\begin{enumerate}
\item The distributional covariant derivative $\nabla V$ of a vector field $V$ that is smooth in $\MM^\pm$ but discontinuous 
across $\HH$, is given by 
$$
\nabla_X V:=(\nabla_X V)^{reg} +[V]_\HH \, \la \Dirac_{\HH},X\ra, 
\qquad X \in \XX(\MM). 
$$
\item The distributional Riemann curvature of the connection $\nabla$ is the sum 
of a regular part and a Dirac measure supported on $\HH$: for all $X,Y,Z \in \XX(\MM)$ 
$$
\aligned 
\Riem(X,Y)Z = (\Riem(X,Y)Z)^{reg}
& + [\nabla_Y Z ]_\HH \la \Dirac_\HH, X\ra 
\\
& - [\nabla_X Z]_\HH \la\Dirac_\HH, Y\ra. 
\endaligned 
$$ 
\item The distributional Ricci curvature of the connection $\nabla$ is the sum 
of a regular part and a Dirac measure supported on $\HH$: for all $X,Y \in \XX(\MM)$ 
$$
\aligned 
\Ric(X,Y) = (\Ric(X,Y))^{reg} 
& + [\la E^{(\alpha)},\nabla_{E_{(\alpha)}}Y \ra]_\HH \la\Dirac_\HH, X\ra 
\\
& - [\la E^{(m)},\nabla_X Y \ra]_\HH \la\Dirac_\HH, E_{(m)}\ra,
\endaligned 
$$ 
 where $\{E_{(\alpha)}\}$ is a frame adapted to the hypersurface $\HH$. 
\end{enumerate}
The regular parts $(\Riem(X,Y)Z)^{reg}$ and $(\Ric(X,Y)Z)^{reg}$ belong to $L^1_\loc(\MM)$, while the jumps 
$[\nabla_Y Z]_{\HH}$ and $[\nabla_X Z]_{\HH}$ belong to the space $W^{1-1/p,p}_\loc(\HH)$. 
\end{theorem}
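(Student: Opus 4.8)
The plan is to reduce everything to assertion (1), the computational core, and then derive (2) and (3) by iteration and contraction. For (1) I would begin from the extension of the connection to $L^2_\loc$ vector fields (Definition~\ref{ext-nabla}): for a test $1$-form $\theta$ and a density $\omega\in\Dcal\Lambda^\dM(\MM)$,
$$
\lad\la\nabla_X V,\theta\ra,\omega\rad
= -\int_\MM \la V,\theta\ra\,\Lcal_X\omega - \int_\MM \la V,\nabla_X\theta\ra\,\omega,
$$
where the first term uses the definition of $X$ acting on the scalar distribution $\la V,\theta\ra$ and the second is a genuine $L^1_\loc$ pairing---this is exactly where the $L^2_\loc$ hypothesis enters, so that $\la V,\nabla_X\theta\ra$ is integrable. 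I would then split both integrals over $\MM^+$ and $\MM^-$, on each of which $V$ is smooth up to the boundary, and integrate by parts by means of formula~\eqref{formula-div} with $f=\la V,\theta\ra$.

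The volume terms recombine, via the Leibnitz identity $X\la V,\theta\ra=\la\nabla_X V,\theta\ra+\la V,\nabla_X\theta\ra$ (holding as $L^1_\loc$ functions on each side), into $\int_\MM\la(\nabla_X V)^{reg},\theta\ra\,\omega$, while the contributions $\la V,\nabla_X\theta\ra$ cancel against the last integral above. The boundary contributions are where orientation must be tracked with care: since $\del\MM^+$ and $\del\MM^-$ both equal $\HH$ but carry opposite orientations, the two surface integrals combine into the single jump integral $\int_\HH\la[V]_\HH,\theta\ra\,i_X\omega$ (the overall sign being fixed by orienting $\HH$ as $\del\MM^-$). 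By Definition~\ref{dirac}, together with the rule for multiplying the Dirac measure by a function, this integral is precisely the action of $[V]_\HH\,\la\Dirac_\HH,X\ra$ on $\theta$ and $\omega$, which yields (1).

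For (2) I would use the formula $\Riem(X,Y)Z=\nabla_X\nabla_Y Z-\nabla_Y\nabla_X Z-\nabla_{[X,Y]}Z$, valid in the distribution sense by Definition~\ref{curva} and the extension formula. Applying (1) to the discontinuous fields $\nabla_Y Z$ and $\nabla_X Z$ generates the two Dirac terms $[\nabla_Y Z]_\HH\la\Dirac_\HH,X\ra$ and $-[\nabla_X Z]_\HH\la\Dirac_\HH,Y\ra$, while $\nabla_{[X,Y]}Z$ contributes only to the regular part, being a first-order covariant derivative of the globally smooth $Z$; the remaining regular pieces assemble into $(\Riem(X,Y)Z)^{reg}$. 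For (3) I would contract the identity of (2)---taken with $E_{(\alpha)}$ in the second slot and $Y$ as the differentiated field---against the dual frame $E^{(\alpha)}$, as in Definition~\ref{ricci}. Pulling the smooth frame through the jump brackets produces the two candidate singular terms; the decisive point is that for an adapted frame $\la\Dirac_\HH,E_{(i)}\ra=0$ in the tangential directions $i=1,\dots,m-1$ (by the Remark after Definition~\ref{dirac}), so only $\alpha=m$ survives in the second sum, leaving exactly the stated expression.

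The regularity claims are then immediate: the regular parts lie in $L^1_\loc$ because $\Riem^\pm$ and $\Ric^\pm$ are built on each side from $\del\Gamma\in L^p_\loc$ and $\Gamma\star\Gamma\in L^1_\loc$, and the jumps $[\nabla_Y Z]_\HH$, $[\nabla_X Z]_\HH$ belong to $W^{1-1/p,p}_\loc(\HH)$ by the Sobolev trace theorem applied to the one-sided $W^{1,p}_\loc$ fields $\nabla^\pm_Y Z$ and $\nabla^\pm_X Z$. The main obstacle I anticipate is the bookkeeping in (1): checking that every product that appears is a bona fide distribution (the role of $L^2_\loc$), that the integration by parts on the two manifolds-with-boundary is legitimate with consistently matched orientations, and that the resulting surface integral is correctly identified with the distributional product $[V]_\HH\la\Dirac_\HH,X\ra$ and not some spurious boundary term.
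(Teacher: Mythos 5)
Your proposal is correct and follows essentially the same route as the paper's proof: part (1) by splitting the defining duality pairing over $\MM^\pm$, integrating by parts via formula~\eqref{formula-div} with $\HH$ oriented as $\del\MM^-$, and identifying the boundary term with $[V]_\HH\la\Dirac_\HH,X\ra$; part (2) by applying (1) to the discontinuous fields $\nabla_Y Z$ and $\nabla_X Z$; and part (3) by contraction in an adapted frame, using $\la\Dirac_\HH,E_{(j)}\ra=0$ to kill all but the $\alpha=m$ term. Your added justification of the regularity claims via the trace theorem is a point the paper leaves implicit, but the argument is the same.
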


\begin{corollary} [Singular parts of curvature tensors]
\label{PG-2b} 
Under the assumptions and notation of Theorem~\ref{PG-2} 
the following hold. 
\begin{enumerate}
\item The singular part in the Riemann curvature vanishes if and only if the connection $\nabla$ is continuous 
across $\HH$ (i.e., its traces from $\MM^\pm$ coincide).  
\item The singular part of the Ricci tensor  
vanishes if and only if, in an adapted frame,
$\la E^{(m)},\nabla_X Y \ra$ and $\la E^{(j)},\nabla_{E_{(j)}}Y \ra$ 
are continuous across $\HH$ for all vector fields $X,Y\in \XX(\MM)$ satisfying $X_x\in T_x\HH$ for all $x\in\HH$. 
\end{enumerate} 
\end{corollary}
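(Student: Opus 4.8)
The plan is to obtain both equivalences by reading off the explicit singular parts supplied by Theorem~\ref{PG-2} and then extracting pointwise information on $\HH$ from the way $\Dirac_\HH$ interacts with the splitting of $T\MM$ along $\HH$ into tangential and transverse directions. First I would record the decisive elementary fact: in a frame $\{E_{(\alpha)}\}$ adapted to $\HH$ the Remark following Definition~\ref{dirac} yields $\la\Dirac_\HH,X\ra=\la E^{(m)},X\ra\,\la\Dirac_\HH,E_{(m)}\ra$ for every $X\in\XX(\MM)$, the tangential components being annihilated. I would then note that $\la\Dirac_\HH,E_{(m)}\ra$ is non-degenerate, in the sense that as $\omega$ ranges over $\Dcal\Lambda^m(\MM)$ the restriction $i_{E_{(m)}}\omega|_\HH$ sweeps out every compactly supported $(m-1)$-density on $\HH$; hence for any $f\in W^{1-1/p,p}_\loc(\HH)$ the product $f\,\la\Dirac_\HH,E_{(m)}\ra$ vanishes in $\Dcal'(\MM)$ if and only if $f=0$ on $\HH$. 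This ``fundamental lemma on $\HH$'' is what will convert the vanishing of each surface-supported distribution into the pointwise vanishing of a jump.

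For part (1) the implication from continuity to vanishing is immediate, since continuity makes every jump in the singular part of Theorem~\ref{PG-2}(2) equal to zero. For the converse I would use the antisymmetry of that singular part in $(X,Y)$ and test it on frame pairs: taking $X=E_{(m)}$ and $Y=E_{(i)}$ tangent annihilates the second term and leaves $[\nabla_{E_{(i)}}Z]_\HH\,\la\Dirac_\HH,E_{(m)}\ra$, so the lemma forces $[\nabla_{E_{(i)}}Z]_\HH=0$ for every tangent $E_{(i)}$ and every $Z$. Since $\nabla$ is $\Ccal^\infty(\MM)$-linear in its first slot, $(X,Z)\mapsto[\nabla_X Z]_\HH$ is a genuine tensor along $\HH$, and its vanishing on all tangential first arguments is precisely the statement that the one-sided traces of $\nabla$ agree in every direction tangent to $\HH$. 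I would close the equivalence by substituting $\la\Dirac_\HH,X\ra=\la E^{(m)},X\ra\,\la\Dirac_\HH,E_{(m)}\ra$ back into the two terms and observing that, once the tangential jumps vanish, the remaining transverse contributions cancel identically.

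For part (2) I would run the same scheme on the singular Ricci part of Theorem~\ref{PG-2}(3). After factoring out the common $\la\Dirac_\HH,E_{(m)}\ra$, its vanishing reduces to the single pointwise identity on $\HH$,
$$
\big[\la E^{(\alpha)},\nabla_{E_{(\alpha)}}Y\ra\big]_\HH\,\la E^{(m)},X\ra
= \big[\la E^{(m)},\nabla_X Y\ra\big]_\HH,
$$
required for all $X,Y\in\XX(\MM)$. I would then separate the two directions carried by $X$: choosing $X$ tangent makes the left-hand side vanish and gives $[\la E^{(m)},\nabla_X Y\ra]_\HH=0$ for all tangent $X$, the first continuity condition; choosing $X=E_{(m)}$ makes the right-hand side the single transverse term, which cancels the $\alpha=m$ summand on the left and leaves $[\la E^{(j)},\nabla_{E_{(j)}}Y\ra]_\HH=0$ summed over tangent indices, the second condition. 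The converse direction follows by reinserting both conditions and checking that the two sides of the displayed identity then coincide.

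The step I expect to demand the most care is the bookkeeping of tangential versus transverse directions, namely making sure that only the transverse slot of each Dirac contraction is ever activated and, in part (2), that the tangential trace $\la E^{(j)},\nabla_{E_{(j)}}Y\ra$ is cleanly separated from the transverse term $\la E^{(m)},\nabla_{E_{(m)}}Y\ra$ concealed inside the summation $\la E^{(\alpha)},\nabla_{E_{(\alpha)}}Y\ra$. A secondary point will be to confirm that, although an adapted frame is used to extract the conditions, the resulting requirements are intrinsic to $\HH$, which holds because any change of adapted frame preserves $T_x\HH$ and the conormal direction and therefore preserves each of the jump conditions.
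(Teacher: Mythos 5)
Your treatment of part (2) is correct and is essentially the paper's own proof: the paper likewise splits the test field $X$ into a tangential field plus a multiple of $E_{(m)}$, uses $\la\Dirac_\HH,E_{(j)}\ra=0$, and observes the cancellation of the $\alpha=m$ summand, so the two conditions $[\la E^{(m)},\nabla_XY\ra]_\HH=0$ (for tangent $X$) and $[\la E^{(j)},\nabla_{E_{(j)}}Y\ra]_\HH=0$ emerge exactly as in your displayed identity. The nondegeneracy lemma you isolate --- $f\,\la\Dirac_\HH,E_{(m)}\ra=0$ in $\Dcal'(\MM)$ if and only if $f=0$ on $\HH$ --- is used only tacitly by the paper, and making it explicit is a genuine improvement.

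Part (1) is where there is a real problem, and note that the paper's proof of the corollary is in fact silent on part (1): it treats only the Ricci statement. Your converse yields $[\nabla_{E_{(i)}}Z]_\HH=0$ for all tangent $E_{(i)}$ and all $Z$, and you then assert that this ``is precisely the statement that the one-sided traces of $\nabla$ agree in every direction tangent to $\HH$.'' That sentence is accurate, but it is \emph{not} the conclusion of the corollary, which asserts that the traces of $\nabla$ from $\MM^\pm$ coincide outright, i.e.\ $[\nabla_XZ]_\HH=0$ for \emph{all} $X,Z\in\XX(\MM)$, transverse $X$ included --- equivalently, continuity of all Christoffel symbols $\Gamma^\gamma_{\alpha\beta}$, including $\Gamma^\gamma_{m\beta}$. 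Your own closing observation, that the transverse contributions cancel identically by antisymmetry, shows that the singular part of the Riemann tensor carries no information whatsoever about $[\nabla_{E_{(m)}}Z]_\HH$, so no argument can upgrade your conclusion to the literal statement. Indeed the literal ``if'' direction fails for general connections: in adapted coordinates take Christoffel symbols vanishing identically on $\MM^-$ and vanishing on $\MM^+$ except for a constant $\Gamma^1_{mm}$. Then $[\nabla_YZ]_\HH=Y^mZ^m\,[\Gamma^1_{mm}]\,E_{(1)}$, so
\begin{equation*}
[\nabla_YZ]_\HH\,\la\Dirac_\HH,X\ra-[\nabla_XZ]_\HH\,\la\Dirac_\HH,Y\ra
= \left(Y^mX^m-X^mY^m\right)Z^m\,[\Gamma^1_{mm}]\,E_{(1)}\,\la\Dirac_\HH,E_{(m)}\ra=0,
\end{equation*}
yet $\nabla$ is discontinuous across $\HH$ (and this connection is even torsion-free, smooth on each side, hence within the hypotheses of Theorem~\ref{PG-2}). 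So you must either state explicitly that ``continuous across $\HH$'' in part (1) is to be read as continuity of $\nabla_X$ for $X$ tangent to $\HH$ --- the reading under which your proof is complete and correct --- or record that the literal statement cannot be recovered from the singular part. As written, the identification is asserted rather than proved, and it is the one step of your proposal that does not hold up.
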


\begin{proof}[Proof of Theorem~\ref{PG-2}] 
%    To simplify the notation we write here $\Delta_X := \la \Dirac_\HH, X\ra$. 
First of all, the covariant derivative $\nabla_X V$ is, by definition, the vector distribution given by 
\be
\label{cov-1}
\la\nabla_X V,\theta\ra = X\la V,\theta\ra-\la V,\nabla_X \theta\ra, \qquad \theta\in\YY(\MM). 
\ee
The first term in the right-hand side is the scalar distribution defined by
$$
\lad X \la V, \theta \ra, \varphi \rad = - \lad \la V,\theta\ra, \Lcal_X \varphi \rad, 
\qquad \varphi \in \Dcal\Lambda^\dM(\MM). 
$$
Since the restrictions of $\la V,\theta\ra$ to $\MM^\pm$ are smooth, the last relation becomes 
$$
\aligned
\lad X\la V,\theta\ra,\varphi\rad
& =-\int_\MM \la V,\theta\ra \Lcal_X\varphi\\
& =-\int_{\MM^+} \la V^+,\theta\ra \Lcal_X\varphi-\int_{\MM^-} \la V^-,\theta\ra \Lcal_X\varphi
\endaligned
$$
from which, in view of the formula \eqref{formula-div} 
with $\HH$ oriented as the boundary of $\MM^-$, we deduce 
$$
\aligned
\lad X\la V,\theta\ra,\varphi\rad
 & =\int_{\MM^+} X\la V^+,\theta\ra\varphi  +\int_\HH \la V^+,\theta\ra i_X\varphi 
 \\
&  \quad +\int_{\MM^-} X\la V^-,\theta\ra \varphi-\int_\HH \la V^-,\theta\ra i_X\varphi. 
\endaligned
$$
Using this expression in \eqref{cov-1}, we deduce 
$$
\aligned
&\lad\la\nabla_X V,\theta\ra,\varphi\rad 
\\
& = \int_{\MM^+} (X\la V^+,\theta\ra-\la V^+,\nabla^+_X\theta\ra)\varphi
\\
& \quad +\int_{\MM^-} (X\la V^-,\theta\ra-\la V^-,\nabla^-_X\theta\ra)\varphi
    +\int_\HH \la [V]_\HH,\theta\ra i_X\varphi
\\
&= \int_{\MM^+} \la \nabla^+_X V^+,\theta\ra\varphi
    +\int_{\MM^-} \la \nabla^-_X V^-,\theta\ra\varphi
    +\int_\HH \la [V]_\HH,\theta\ra i_X\varphi.
\endaligned
$$
In other words, the covariant derivative of $V$ is the vector distribution
$$
\nabla_X V:=(\nabla_X V)^{reg} +[V]_\HH \, \la \Dirac_\HH, X\ra,
$$
which provides the desired identity for the connection. 

\

Second, by Definition~\ref{curva}, the distributional Riemann curvature is  
$$
\Riem(X,Y)Z = \nabla_X \nabla_Y Z - \nabla_Y \nabla_X Z - \nabla_{[X,Y]} Z 
$$ 
for all  $X,Y,Z \in \XX(\MM)$. Since the vector fields $\nabla_Y Z$ and $\nabla_X Z$ both 
satisfy the regularity assumptions on $V$ of Proposition~\ref{PG-2}, we deduce 
$$
\aligned
& \Riem(X,Y)Z 
\\
& = (\nabla_X \nabla_Y Z)^{reg}
 + [\nabla_Y Z]_\HH \, \la \Dirac_\HH, X\ra ) 
- (\nabla_Y \nabla_X Z)^{reg} +[\nabla_X Z]_\HH \la \Dirac_\HH, Y\ra ) - \nabla_{[X,Y]} Z\\
&=(\Riem(X,Y)Z)^{reg}+ [\nabla_Y Z]_\HH \la \Dirac_\HH, X\ra -[\nabla_X Z]_\HH \la \Dirac_\HH, Y\ra,
\endaligned
$$
where we set
$$
(\Riem(X,Y)Z)^{reg}:=
\begin{cases}
\Riem^+(X,Y)Z    & \text{ in } \MM^+,
\\
\Riem^-(X,Y)Z    & \text{ in } \MM^-.
\end{cases}
$$
This establishes the identity in the theorem. 

Third, from the definition of the Ricci tensor, we obtain 
$$
\Ric(X,Y)=\la E^{(\alpha)}, \Riem(X,E_{(\alpha)})Y\ra,
$$ 
and from the above decomposition of the Riemann curvature tensor, we deduce that 
$$
\aligned 
\Ric(X,Y) =(\Ric(X,Y))^{reg}
& + [\la E^{(\alpha)},\nabla_{E_{(\alpha)}}Y \ra]_\HH \la\Dirac_\HH, X\ra 
\\
& - [\la E^{(\alpha)},\nabla_X Y \ra]_\HH \la\Dirac_\HH, E_{(\alpha)}\ra.
\endaligned 
$$
Now, observe that $\la\Dirac_\HH, E_{(j)}\ra=0$ since the vector fields $E_{(j)}$ are tangent to $\HH$. 
This establishes the desired formula for the Ricci tensor.
\end{proof}

\begin{proof}[Proof of Corollary~\ref{PG-2b}] 
We see immediately that the singular part of the Ricci tensor vanishes if and only if 
$$
\Ric(X,Y)=0, \qquad \Ric(E_{(m)},Y)=0
$$
for all $X,Y\in \XX(\MM)$ such that $X_x\in T_x\HH$ at any point $x\in\HH$. 
But for such vector fields, the singular part of $\Ric(X,Y)$ is 
$$
- [\la E^{(m)},\nabla_X Y \ra]_\HH \la\Dirac_\HH, E_{(m)}\ra, 
$$
while the singular part of $\Ric(E_{(m)},Y)$ is
$$
[\la E^{(\alpha)},\nabla_{E_{(\alpha)}}Y \ra]_\HH \la\Dirac_\HH, E_{(m)}\ra
- [\la E^{(m)},\nabla_{E_{(m)}} Y \ra]_\HH \la\Dirac_\HH, E_{(m)}\ra.
$$
The latter is nothing but $[\la E^{(j)},\nabla_{E_{(j)}}Y \ra]_\HH \la\Dirac_\HH, E_{(m)}\ra$. 
\end{proof}

%-------------------------------------------------------------------------------------------------------------------

\section{Distributional curvature associated with a metric}
\label{geometry-metric} 
 
\subsection{Distributional metrics}
\label{distrib-metrics}

To a smooth metric tensor $g$ defined on $\MM$, one associates a unique (Levi-Cevita) connection operator $\nabla$ 
satisfying $\nabla g = 0$ and the zero torsion condition $\Tor=0$, where 
\be
\Tor(X,Y) := \nabla_X Y - \nabla_Y X - [X,Y],  \qquad X,Y \in \XX(\MM).  
\label{G-notorsion}
\ee 
A natural question is whether the notion of Levi-Cevita connection extends to metrics with weak regularity. 
We show now that a distributional connection (in the sense introduced in the previous section) 
can be associated with a distributional metric (in a sense defined now).

\begin{definition}
A {\rm distributional metric} $g$ on $\MM$ is a symmetric and non-degenerate $(0,2)$-tensor 
distribution on $\MM$, that is, $g:\XX(\MM)\times \XX(\MM)\to\Dcal'(\MM)$ satisfying 
$$
\aligned
& g(X,Y)=g(Y,X),
\\
& g(X,Y)=0 \ \text{ for all } Y  \, \Rightarrow \, X=0.
\endaligned
$$
\end{definition}

Further regularity on $g$ will be imposed later on, when necessary. 
It would be natural to define a distributional connection $\nabla$ associated with 
a distributional metric $g$ by requiring the two conditions 
\be
\label{metric-nabla}
\Tor =0, \qquad \nabla g = 0, 
\ee
in a suitably weak sense. The second equation must be handled carefully, 
since (cf.~the discussion in the previous section) non-smooth connections 
do not act on non-smooth tensors such as the metric $g$. 
Therefore, we will not rely directly on the equation $\nabla g = 0$, 
and to circumvent this difficulty we take advantage of the additional structure induced by the metric $g$. 

We make the following two observations valid for {\sl smooth} metrics: 
\begin{enumerate} 
\item The Levi-Cevita connection of a metric $g$ satisfies the Koszul formula
\be
\label{Koszul} 
\aligned
2 \, g(\nabla_XY,Z)=& X(g(Y,Z))+Y(g(X,Z))-Z(g(X,Y))\\
& -g(X,[Y,Z])-g(Y,[X,Z])+g(Z,[X,Y])
\endaligned
\ee 
for all $X,Y,Z \in \XX(\MM)$. This is easily checked from the conditions \eqref{metric-nabla}. 
\item The left-hand side of \eqref{Koszul} takes the equivalent form 
$$
g(\nabla_XY,Z)=\la(\nabla_XY)^\flat,Y\ra,
$$
where $X\in\XX(\MM)\to X^\flat\in\YY(\MM)$ is the ``duality operator'' defined by 
$$
\la X^\flat, Y\ra:=g(X,Y), \qquad Y\in \XX(\MM).
$$  
\end{enumerate}

Thanks to the above observation we see that the connection of a metric can be determined by 
the formula 
\be
\label{koszul}
\aligned
2 \, \la(\nabla_XY)^\flat,Z\ra
= &X(g(Y,Z))+Y(g(X,Z))-Z(g(X,Y))\\
& -g(X,[Y,Z])-g(Y,[X,Z])+g(Z,[X,Y]). 
\endaligned
\ee
Furthermore, the conditions \eqref{metric-nabla} characterizing the Levi-Cevita connection 
read for all $X,Y,Z\in\XX(\MM)$
\be
\label{metric-nabla.bis}
\aligned
& (\nabla_X Y)^\flat-(\nabla_Y X)^\flat -[X,Y]^\flat=0,
\\
& X(g(Y,Z))-\la(\nabla_X Y)^\flat,Z\ra-\la Y,(\nabla_X Z)^\flat\ra=0. 
\endaligned
\ee

We now observe that all these relations feature the term $(\nabla_X Y)^\flat$ and that the right-hand side of 
\eqref{koszul} is well-defined in $\Dcal'(\MM)$, 
not only for smooth metrics but also for distributional metrics. 
This suggests how to associate a connection to a distributional metric. 
Specifically, we extend the definition of the operator 
$$
(X,Y)\in\XX(\MM)\times\XX(\MM)\to (\nabla_X Y)^\flat \in\YY(\MM)
$$
to non smooth metrics in such a way that the equations \eqref{metric-nabla.bis}
are satisfied in the distribution sense.

\begin{definition}  
The {\rm distributional Levi-Cevita connection} of a distributional metric $g$ is the operator 
$\nabla^\flat:(X,Y)\in\XX(\MM)\times\XX(\MM)\mapsto \nabla^\flat_X Y\in\Dcal' \Y(\MM)$, 
defined by 
\be
\label{def-nabla}
\aligned
\la \nabla^\flat_X Y,Z \ra:=\frac12 \Big(
& X(g(Y,Z))+Y(g(X,Z))-Z(g(X,Y))\\
& -g(X,[Y,Z])-g(Y,[X,Z])+g(Z,[X,Y])\Big). 
\endaligned
\ee
\end{definition}

In the following, we will refer to \eqref{def-nabla} as the {\sl dual Koszul formula.} 
It is a simple matter to check that $\nabla^\flat$ does satisfy the relations \eqref{metric-nabla} in a weak sense, namely
\be
\label{metric-nabla.ter}
\aligned
& \nabla^\flat_X Y-\nabla^\flat_Y X - [X,Y]^\flat=0,\\
& X(g(Y,Z))-\la\nabla^\flat_X Y,Z\ra-\la Y,\nabla^\flat_X Z\ra=0,
\endaligned
\ee
for all $X,Y,Z\in\XX(\MM)$. 
Furthermore, when $g$ is smooth, then $\nabla^\flat_X Y=(\nabla_X Y)^\flat$ for all $X,Y\in\XX(\MM)$ and 
we recover the standard requirements that $\Tor = 0$ and $\nabla g =0$.

\begin{theorem}[Stability under convergence in the distribution sense]  
\label{seq-met}
Let $g^{(n)}$ ($n=1$, $2$, $\ldots$) be a sequence of distributional metrics 
converging in the distribution sense to some limiting metric $g^{(\infty)}$. 
Then the distributional connection ${\nabla^\flat}^{(n)}$ associated with $g^{(n)}$
converges in the distribution sense to the connection ${\nabla^\flat}^{(\infty)}$ 
associated with $g^{(\infty)}$, i.e., for all $X,Y,Z\in\XX(\MM)$
$$
\la {\nabla^\flat}^{(n)}_X Y, Z \ra \to \la {\nabla^\flat}^{(\infty)}_X Y, Z \ra 
\quad 
\text{ in }\Dcal'(\MM).
$$
\end{theorem}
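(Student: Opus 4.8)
The plan is to exploit the linearity of the dual Koszul formula \eqref{def-nabla} in the metric $g$, together with the continuity of two elementary operations with respect to distributional convergence. First I would observe that, for fixed $X,Y,Z\in\XX(\MM)$, the right-hand side of \eqref{def-nabla} is a sum of six terms of only two kinds. The terms of the first kind have the form $W(g(U,V))$, where $W$ is one of the fields $X,Y,Z$ and $(U,V)$ is one of the pairs $(Y,Z),(X,Z),(X,Y)$; each is the action of a smooth vector field on the scalar distribution $g(U,V)$, in the sense introduced in Section~\ref{distributions}. The terms of the second kind have the form $g(U,V)$, where $(U,V)$ is one of $(X,[Y,Z]),(Y,[X,Z]),(Z,[X,Y])$; since the Lie brackets $[Y,Z],[X,Z],[X,Y]$ are again smooth vector fields, these are simply evaluations of the metric distribution on fixed pairs of smooth fields.

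The terms of the second kind are handled directly. By the hypothesis $g^{(n)}\to g^{(\infty)}$ in the distribution sense and the definition of distributional convergence of tensor distributions, we have $g^{(n)}(U,V)\to g^{(\infty)}(U,V)$ in $\Dcal'(\MM)$ for every fixed pair of smooth vector fields, and in particular for the three pairs listed above.

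For the terms of the first kind I would record the continuity of the derivation $A\mapsto WA$ on $\Dcal'(\MM)$ for a fixed smooth field $W\in\XX(\MM)$. By the very definition of this action,
$$
\lad WA,\omega\rad = -\lad A,\Lcal_W\omega\rad, \qquad \omega\in\Dcal\Lambda^\dM(\MM).
$$
Since $\Lcal_W\omega$ is again a fixed compactly supported $m$-form, any convergence $A^{(n)}\to A^{(\infty)}$ in $\Dcal'(\MM)$ forces $WA^{(n)}\to WA^{(\infty)}$ in $\Dcal'(\MM)$. Applying this with $A^{(n)}=g^{(n)}(U,V)$, and using the second-kind convergence just established, yields convergence of all three first-kind terms.

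Summing the six convergent contributions and dividing by $2$ then gives
$$
\la {\nabla^\flat}^{(n)}_X Y, Z \ra \to \la {\nabla^\flat}^{(\infty)}_X Y, Z \ra
\quad\text{in }\Dcal'(\MM),
$$
which is the assertion. I do not expect any genuine obstacle here: the argument is essentially the remark that the dual Koszul formula is linear in $g$ and is built from operations that are continuous in the distribution topology. The only point that deserves care is precisely this continuity of $A\mapsto WA$, where the test density is replaced by its Lie derivative; once that is noted, the result is immediate.
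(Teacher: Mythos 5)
Your proposal is correct and follows exactly the paper's own (very terse) argument: the dual Koszul formula \eqref{def-nabla} is linear in $g$, its evaluation terms converge by the definition of distributional convergence of tensor distributions, and the derivative terms converge because $A\mapsto WA$ is continuous on $\Dcal'(\MM)$ via $\lad WA,\omega\rad=-\lad A,\Lcal_W\omega\rad$. You have simply written out the details that the paper leaves implicit.
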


\begin{proof} It suffices to use the definition of the convergence in the distribution sense of $g^{(n)}$ 
(see Section~\ref{distributions}) together with the definition of the Levi-Cevita connection, as given by the dual Koszul formula. 
\end{proof}

This result should be compared with Theorem~\ref{seq-conn} where local $L^2$ strong convergence of 
a sequence of connections was assumed to deduce the convergence (in the distribution sense) of their curvatures.  
Here, the convergence of $g^{(n)}$ in the distribution sense 
suffices to imply the convergence of their connections. 
This is due to the fact that, roughly speaking, the connection $\nabla^\flat$ depends ``linearly" upon the metric, 
while the curvature depends ``quadratically'' upon the connection.    

Hence, a distributional metric induces a distributional connection, which generalizes 
the Levi-Civita connection associated with a smooth metric. Without further regularity assumption on the metric, 
the Levi-Cevita connection is a distribution only and, as explained in Section~\ref{distrib-connections} one can not define its 
curvature. This motivates us to now introduce a class of more regular metrics.

%-----------------------------------------------------------------------------------------------------------------
\subsection{The class of $H^1$ metric tensors}
\label{H1-metrics}

We now specialize the results in Sections~\ref{L2-connections} and \ref{jump-connections} to the case that 
the connection is determined by a metric. The objective is to identify regularity assumptions on the 
metric guaranteeing that the results of Sections~\ref{L2-connections} and \ref{jump-connections} apply 
to the induced connection and allow us to define the Riemann, Ricci, and scalar curvature.  
We recover, using here a coordinate-free presentation, 
results obtained earlier by Geroch and Traschen \cite{GerochTraschen}. 

We first consider metrics $g$ for which the induced connection is of class $L^2_\loc$, 
which in view of the results in Section~\ref{L2-connections} guarantees that the Riemann curvature tensor 
is well-defined in the distribution sense. To this end, recall that the connection induced by $g$ is the operator
$$
(X,Y)\to \nabla_X Y,
$$
uniquely defined by the Koszul formula
$$
\aligned
2g(\nabla_XY,Z)=& X(g(Y,Z))+Y(g(X,Z))-Z(g(X,Y))\\
& -g(X,[Y,Z])-g(Y,[X,Z])+g(Z,[X,Y]),
\endaligned
$$
which is valid for all $X,Y,Z\in\XX(\MM)$. 

Specifically, if $E_{(\alpha)}$, $\alpha = 1,2,\ldots,m$, is a smooth local frame on $\MM$, then 
$$
\nabla_XY= g^{\alpha\beta} g(\nabla_XY,E_{(\beta)}) E_{(\alpha)},
$$
where $(g^{\alpha\beta}_x)$ is the inverse of the matrix $(g(E_{(\sigma)},E_{(\tau)})_x)$ for all $x\in\MM$. 
In order to have $\nabla_XY\in L^2_\loc$, it suffices for the metric $g$ to satisfy the assumptions 
$$
\aligned
& g(\nabla_XY,Z)\in L^2_\loc, \qquad X,Y,Z\in\XX(\MM),
\\
& g^{\alpha\beta}\in L^\infty_\loc, \qquad \alpha,\beta = 1, \ldots, m. 
\endaligned
$$
The second assumption is satisfied for instance if the metric $g$ is in $L^\infty_\loc$ and is also \emph{
uniformly non-degenerate}, in the sense that 
$$
|\det(g(E_{(\alpha)}, E_{(\beta)}))|\ge C \qquad \text{ in } \MM
$$
for some positive constant $C$, as is checked from the definition of the inverse of a matrix. 

Since $\nabla$ is of class $L^2_\loc$, we can now apply the results of Section~\ref{L2-connections}.

\begin{proposition}
Let $g$ be a uniformly non-degenerate metric of class $H^1_\loc\cap L^\infty_\loc$ over a smooth manifold $\MM$.
 Then 
the following properties holds: 
\begin{enumerate}
\item The connection $\nabla$ defined by the Koszul formula is of class $L^2_\loc$ and satisfies 
$$
\Tor=0, \qquad \nabla g=0.
$$
\item The Riemann and Ricci tensors associated with $g$ are well-defined as distributions, in virtue
of Definitions~\ref{curva} and \ref{ricci}.
\item The scalar curvature of $g$ is well-defined as a distribution on $\MM$ by 
$$
R:=g^{\alpha\beta} \Ric(E_{(\alpha)},E_{(\beta)}).
$$
\end{enumerate}
\end{proposition}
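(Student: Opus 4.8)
The plan is to treat the three assertions in turn, the last being the only one requiring genuine care. For (1) I would start from the pointwise representation $\nabla_XY = g^{\alpha\beta}\,g(\nabla_XY,E_{(\beta)})\,E_{(\alpha)}$ recorded just before the statement, and show that each factor lands in the right space. By the Koszul formula \eqref{koszul}, the scalar $g(\nabla_XY,Z)$ splits into three first-order terms of the form $X(g(Y,Z))$ and three zeroth-order terms of the form $g(X,[Y,Z])$. Since $g\in H^1_\loc$, the functions $g(Y,Z)$ lie in $H^1_\loc$ for smooth $Y,Z$, so their directional derivatives lie in $L^2_\loc$; the zeroth-order terms lie in $L^\infty_\loc\subset L^2_\loc$ locally. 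Hence $g(\nabla_XY,Z)\in L^2_\loc$, and multiplying by $g^{\alpha\beta}\in L^\infty_\loc$ keeps us in $L^2_\loc$, so $\nabla\in L^2_\loc(\MM)$ in the sense of Definition~\ref{ext-nabla}. The identities $\Tor=0$ and $\nabla g=0$ are then read off from \eqref{metric-nabla.ter}: once $\nabla$ is known to be $L^2_\loc$ we have $\nabla^\flat_XY=(\nabla_XY)^\flat$, so the two relations there become precisely the torsion-free and metric-compatibility conditions, now holding pointwise almost everywhere rather than merely weakly.

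Part (2) is then immediate: having verified $\nabla\in L^2_\loc$, the entire machinery of Section~\ref{L2-connections} applies without change, so Definitions~\ref{curva} and \ref{ricci} produce well-defined tensor distributions $\Riem$ and $\Ric$.

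The substance is in (3), and the obstacle is that one cannot in general multiply a distribution by a mere $L^\infty_\loc$ function, whereas $R:=g^{\alpha\beta}\Ric(E_{(\alpha)},E_{(\beta)})$ asks for exactly such a product. My plan is to gain regularity on both factors. First I would show $g^{\alpha\beta}\in H^1_\loc\cap L^\infty_\loc$: the entries of the inverse are cofactors divided by $\det$, the algebra $H^1_\loc\cap L^\infty_\loc$ is stable under products (by the Leibniz rule, since both factors are bounded), and uniform non-degeneracy gives $1/\det(g(E_{(\alpha)},E_{(\beta)}))\in L^\infty_\loc$ with $L^2_\loc$ derivative $-\del(\det g)/(\det g)^2$, so the quotient stays in $H^1_\loc\cap L^\infty_\loc$. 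Second I would expose the structure of the curvature distribution: unwinding Definitions~\ref{curva} and \ref{ricci}, each scalar component $\Ric(E_{(\alpha)},E_{(\beta)})$ is a finite sum of terms of two kinds, namely distributional derivatives $X\la\nabla_YZ,\theta\ra$ of functions that are only $L^2_\loc$, and pairings $\la\nabla_YZ,\nabla_X\theta\ra$ of two $L^2_\loc$ fields, the latter lying in $L^1_\loc$. Multiplication by $f:=g^{\alpha\beta}\in H^1_\loc\cap L^\infty_\loc$ is then legitimate termwise: on the $L^1_\loc$ terms it produces an $L^1_\loc$ function, while on a term of the form $\del h$ with $h\in L^2_\loc$ one sets $f\,\del h:=\del(fh)-(\del f)\,h$, which makes sense because $fh\in L^2_\loc$ and $(\del f)\,h\in L^1_\loc$. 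This defines $R$ as a genuine scalar distribution on $\MM$, and its frame-independence follows from the tensorial character of $\Ric$ together with the transformation law of $g^{\alpha\beta}$. The delicate point throughout is this compatibility between the $H^1_\loc$ regularity of $g^{-1}$ and the ``derivative of $L^2$ plus $L^1$'' form of the curvature; without the former the contraction defining the scalar curvature would be ill-posed.
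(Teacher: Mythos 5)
Your proposal is correct and follows essentially the same route as the paper's proof: the paper likewise treats (1) and (2) as consequences of the preceding discussion, establishes $g^{\alpha\beta}\in H^1_\loc\cap L^\infty_\loc$ from the algebra property and uniform non-degeneracy, splits $\Ric(E_{(\alpha)},E_{(\beta)})$ into $L^1_\loc$ terms plus distributional derivatives of $L^2_\loc$ functions, and defines the product with $g^{\alpha\beta}$ on the derivative terms by exactly the Leibniz-rule device $f\,\del h:=\del(fh)-(\del f)h$. Your write-up merely spells out a few details the paper leaves implicit (the cofactor argument for the inverse metric and the frame-independence of $R$), so there is nothing to correct.
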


\begin{proof} We only need to prove (3). Since $H^1_\loc\cap L^\infty_\loc$ is an algebra and $g$ is 
uniformly non-degenerate, it follows that $g^{\alpha\beta}\in H^1_\loc\cap L^\infty_\loc(\MM)$. On the other hand,
$$
\aligned
& \Ric(E_{(\alpha)},E_{(\beta)})
  = \la E^{(\sigma)}, \Riem(E_{(\alpha)},E_{(\sigma)})E_{(\beta)}\ra\\
& 
   = E_{(\alpha)}(\la E^{(\sigma)}, \nabla_{E_{(\sigma)}}E_{(\beta)}\ra )
  - E_{(\sigma)}(\la E^{(\sigma)}, \nabla_{E_{(\alpha)}}E_{(\beta)}\ra )\\
& \quad 
- \la \nabla_{E_{(\alpha)}}E^{(\sigma)}, \nabla_{E_{(\sigma)}}E_{(\beta)}\ra
+ \la \nabla_{E_{(\sigma)}}E^{(\sigma)}, \nabla_{E_{(\alpha)}}E_{(\beta)}\ra
- \la E^{(\sigma)}, \nabla_{[E_{(\alpha)},E_{(\sigma)}]}E_{(\beta)}\ra.
\endaligned
$$
Since the last three terms belong to $L^1_\loc(\MM)$, we only need to define the product 
of $g^{\alpha\beta}$ and the distributions 
$$
E_{(\alpha)}(\la E^{(\sigma)}, \nabla_{E_{(\sigma)}}E_{(\beta)}\ra ), 
\qquad 
E_{(\sigma)}(\la E^{(\sigma)}, \nabla_{E_{(\alpha)}}E_{(\beta)}\ra ).
$$
This is done by letting 
$$
g^{\alpha\beta}E_{(\alpha)}(\la E^{(\sigma)}, \nabla_{E_{(\sigma)}}E_{(\beta)}\ra )
:=E_{(\alpha)}(g^{\alpha\beta} \la E^{(\sigma)}, \nabla_{E_{(\sigma)}}E_{(\beta)}\ra )
- (E_{(\alpha)}g^{\alpha\beta})\la E^{(\sigma)}, \nabla_{E_{(\sigma)}}E_{(\beta)}\ra,
$$
$$
g^{\alpha\beta}E_{(\sigma)}(\la E^{(\sigma)}, \nabla_{E_{(\alpha)}}E_{(\beta)}\ra )
:=E_{(\sigma)}(g^{\alpha\beta} \la E^{(\sigma)}, \nabla_{E_{(\alpha)}}E_{(\beta)}\ra )
- (E_{(\sigma)}g^{\alpha\beta})\la E^{(\sigma)}, \nabla_{E_{(\alpha)}}E_{(\beta)}\ra,
$$
which are clearly distributions. 
\end{proof}

\begin{remark} Alternatively, the distributional Riemann curvature of a metric $g$ of class $H^1_\loc\cap L^\infty_\loc$ 
can also be defined as the distribution $\Riem:\XX(\MM)\times \XX(\MM)\times \XX(\MM)\times \XX(\MM)\to \Dcal'(\MM)$ given by 
$$
\aligned
\Riem(W,Z,X,Y):=
& X(g(W,\nabla_Y Z))-Y(g(W,\nabla_X Z))\\
&-g(\nabla_X W, \nabla_Y Z)+g(\nabla_Y W, \nabla_X Z)-g(W,\nabla_{[X,Y]}Z)
\endaligned
$$
for all $X,Y,Z,W\in\XX(\MM)$.  The vector field $\Riem(X,Y)Z$ defined in Definition~\ref{curva} and, for every $W$,  
the function $\Riem(W,Z,X,Y)$ are then related as follows: 
$$
\Riem(W,Z,X,Y) = g(W,\Riem(X,Y)Z), \qquad X,Y,Z,W \in \XX(\MM).
$$
\end{remark}

We can also prove: 
 
\begin{theorem}[Stability under strong $H^1_\loc$ convergence]  
\label{seq-met2bis}
Let $g^{(n)}$ ($n=1, 2, \ldots$) be a sequence of $H^1_\loc$ metric tensors converging locally 
in the strong $H^1_\loc$ topology to some limiting metric $g^{(\infty)}$. 
Assume that the inverse metrics $g^{-1}_{(n)}$ converge locally in the strong $L^\infty_\loc$ topology 
to $g^{-1}_{(\infty)}$. 
\begin{enumerate}
\item The Levi-Cevita connections ${\nabla}^{(n)}$ associated with $g^{(n)}$ are of class $L^2_\loc$ 
and converge in the strong $L^2_\loc$ topology to the connection ${\nabla}^{(\infty)}$ of 
the limit metric $g^{(\infty)}$, that is, for all $X,Y \in \XX(\MM)$,  
$$
{\nabla}^{(n)}_XY\to {\nabla}^{(\infty)}_X Y \qquad \text{ strongly in } L^2_\loc.
$$
\item The distributional Riemann, Ricci, and scalar curvature tensors $\Riem^{(n)}$, $\Ric^{(n)}$, 
$R^{(n)}$ of 
the connections $\nabla^{(n)}$ converge in the distribution sense 
to the limiting curvature tensors $\Riem^{(\infty)}$, $\Ric^{(\infty)}$, $R^{(\infty)}$ of $\nabla^{(\infty)}$. 
\end{enumerate}
\end{theorem}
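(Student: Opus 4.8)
The plan is to deduce the strong $L^2_\loc$ convergence of the connections from the strong $H^1_\loc$ convergence of the metrics, then to invoke Theorem~\ref{seq-conn} for the Riemann and Ricci tensors, and finally to treat the scalar curvature separately by tracking the products entering its definition. First I would prove (1). The Koszul formula expresses $g(\nabla_X Y, E_{(\beta)})$ as a linear combination of first derivatives of $g$ (terms such as $X(g(Y,E_{(\beta)}))$) together with zeroth-order terms in $g$. Since $g^{(n)} \to g^{(\infty)}$ strongly in $H^1_\loc$, these first derivatives converge strongly in $L^2_\loc$, and hence so does the right-hand side of the Koszul formula, giving $g^{(n)}(\nabla^{(n)}_X Y, E_{(\beta)}) \to g^{(\infty)}(\nabla^{(\infty)}_X Y, E_{(\beta)})$ in $L^2_\loc$. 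I then reconstruct the connection via $\nabla_X Y = g^{\alpha\beta}g(\nabla_X Y, E_{(\beta)})E_{(\alpha)}$. The components $(g^{(n)})^{\alpha\beta}$ converge to $(g^{(\infty)})^{\alpha\beta}$ strongly in $L^\infty_\loc$ by hypothesis, and the elementary fact that the product of an $L^\infty_\loc$-convergent sequence (with uniformly bounded norms) and an $L^2_\loc$-convergent sequence converges in $L^2_\loc$ yields $\nabla^{(n)}_X Y \to \nabla^{(\infty)}_X Y$ strongly in $L^2_\loc$. That each $\nabla^{(n)}$ is of class $L^2_\loc$ is guaranteed by the preceding Proposition.

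The assertion (2) for $\Riem$ and $\Ric$ is then immediate from Theorem~\ref{seq-conn}, whose single hypothesis, strong $L^2_\loc$ convergence of the connections, has just been verified.

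The scalar curvature demands more care, since $R = g^{\alpha\beta}\Ric(E_{(\alpha)},E_{(\beta)})$ is the product of the distributional Ricci tensor by the inverse metric, a product defined in the proof of the preceding Proposition through integration by parts. I would first upgrade the convergence of the inverse metrics to strong $H^1_\loc$: writing the components as $g_{\gamma\beta} = g(E_{(\gamma)},E_{(\beta)})$ and differentiating $g^{\alpha\gamma}g_{\gamma\beta} = \delta^\alpha_\beta$ gives $E(g^{\alpha\gamma}) = -g^{\alpha\mu}E(g_{\mu\nu})g^{\nu\gamma}$, so the product rule for $L^\infty_\loc \times L^2_\loc \times L^\infty_\loc$ convergent factors shows $E((g^{(n)})^{\alpha\gamma}) \to E((g^{(\infty)})^{\alpha\gamma})$ in $L^2_\loc$. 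With this in hand, I would split $\Ric(E_{(\alpha)},E_{(\beta)})$ as in the Proposition into its $L^1_\loc$ terms (products of two covariant-derivative factors) and its genuinely distributional terms (derivatives of $\la E^{(\sigma)},\nabla E\ra$). The $L^1_\loc$ terms, multiplied by $(g^{(n)})^{\alpha\beta}$, converge in $L^1_\loc$, since a product of two strongly $L^2_\loc$-convergent sequences converges in $L^1_\loc$ and multiplication by the $L^\infty_\loc$-convergent inverse metric preserves this. For the distributional terms, the integration-by-parts formula rewrites $g^{\alpha\beta}E_{(\alpha)}(\la\cdots\ra)$ as $E_{(\alpha)}(g^{\alpha\beta}\la\cdots\ra) - (E_{(\alpha)}g^{\alpha\beta})\la\cdots\ra$; the first summand converges in $\Dcal'$ because $g^{\alpha\beta}\la\cdots\ra$ converges in $L^2_\loc$ and $E_{(\alpha)}$ is continuous on distributions, while the second converges in $L^1_\loc$ by the strong $H^1_\loc$ convergence of the inverse metric combined with the strong $L^2_\loc$ convergence of $\la\cdots\ra$.

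I expect the main obstacle to be precisely this scalar-curvature step: unlike $\Riem$ and $\Ric$, the scalar curvature is not covered by Theorem~\ref{seq-conn}, and its very definition involves a product of a distribution with the inverse metric that only acquires meaning after integration by parts. The key auxiliary result to establish is therefore the strong $H^1_\loc$ convergence of the inverse metrics, which is exactly what allows each of the reinterpreted products to pass to the limit.
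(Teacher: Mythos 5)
Your proof is correct, and for part (1) and for the Riemann and Ricci tensors in part (2) it follows exactly the paper's (very terse) route: the Koszul formula plus the strong $L^\infty_\loc$ convergence of the inverse metrics gives strong $L^2_\loc$ convergence of the connections, and Theorem~\ref{seq-conn} then yields the convergence of $\Riem^{(n)}$ and $\Ric^{(n)}$. Where you genuinely go beyond the paper is the scalar curvature. The paper's entire proof consists of two sentences --- part one follows from the Koszul formula, part two ``is a direct consequence of Theorem~\ref{seq-conn}'' --- but Theorem~\ref{seq-conn} says nothing about the scalar curvature, whose very definition $R = g^{\alpha\beta}\,\Ric(E_{(\alpha)},E_{(\beta)})$ involves multiplying a distribution by the merely $H^1_\loc\cap L^\infty_\loc$ inverse metric, a product that acquires meaning only through the integration-by-parts device of the preceding Proposition. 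You correctly identify this as the real obstacle and supply the missing argument: the auxiliary strong $H^1_\loc$ convergence of the inverse metrics (via $E(g^{\alpha\gamma}) = -g^{\alpha\mu}E(g_{\mu\nu})g^{\nu\gamma}$ and the $L^\infty_\loc\times L^2_\loc\times L^\infty_\loc$ product rule), followed by a term-by-term passage to the limit in the reinterpreted products --- the $L^1_\loc$ terms by the $L^2\times L^2$ product rule, the genuinely distributional terms by continuity of the Lie derivative on $\Dcal'(\MM)$ applied to the $L^2_\loc$-convergent products $g^{\alpha\beta}\la\cdots\ra$. This is a genuine strengthening in rigor over the published proof: your two-step treatment of the scalar curvature is precisely what is needed to justify the claim the paper asserts without argument, and each step of it is sound.
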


\begin{proof} The first part of the theorem follows from the definition of the Levi-Cevita connection, 
as given by the Koszul formula. The second part is a direct consequence of Theorem~\ref{seq-conn}. 
\end{proof}

%--------------------------------------------------------------------------------------------------------------
\subsection{Jump relations}
\label{jump-connections2}

We now consider the special case that $g$ is continuous but its derivatives have jump discontinuities across a hypersurface $\HH$. 
Specifically, with the notation in Section~\ref{jump-connections} we assume that $g$ is continuous on $\MM$ and 
has Sobolev regularity in each component $\MM^\pm$ up to the boundary $\HH$. This implies that the distributional Levi-Civita connection 
$\nabla$ is well-defined up to the boundary $\HH$, and $\nabla$ satisfies the assumptions of Section~\ref{jump-connections}. 
Hence, relying on Theorem~\ref{PG-2} we arrive at:

\begin{theorem} [Jump relations associated with a singular metric] 
\label{PG-2-bis} 
Let $\HH \subset \MM$ be a smooth hypersurface separating two sub-manifolds with boundary $\HH$ in a smooth manifold 
$\MM=\MM^- \cup \MM^+$. Let $g$ be a distributional metric that is continuous over $\MM$ such
that $g|_{\MM^\pm}\in W^{2,p}_\loc(\MM^\pm)$ for some $p\geq m/2$. 
Then, the following properties hold: 
\begin{enumerate}
\item The singular part of the Riemann curvature vanishes if and only if $g$ is of class $W^{1,p}_\loc(\MM)$.
\item The singular part in the Ricci tensor vanishes if and only if $\la E^{(m)},\nabla_X Y \ra$ and $\la E^{(j)},\nabla_{E_{(j)}}Y \ra$ 
both are continuous across $\HH$ (i.e., their traces from $\MM^\pm$ coincide)
for all vector fields $X,Y\in \XX(\MM)$ that satisfy $X_x\in T_x\HH$, $x\in\HH$.  
\item The scalar curvature of $g$ is is the sum of a regular part and a Dirac measure supported on $\HH$, i.e., 
$$
R := R^{reg}+[\la g^{m\beta}E^{(j)}-g^{j\beta} E^{(m)},\nabla_{E_{(j)}}E_{(\beta)}\ra]_\HH \la\Dirac_\HH, E_{(m)}\ra. 
$$
\end{enumerate} 
Moreover, the regular parts
$$
(\Riem(X,Y)Z)^{reg}, \qquad   (\Ric(X,Y)Z)^{reg}, \qquad R^{reg} 
$$
belong to $L^p_\loc(\MM)$, while the jump parts 
$$
[\la g^{m\beta} E^{(j)}- g^{j\beta}E^{(m)} ,\nabla_{E_{(j)}}E_{(\beta)}\ra]_\HH
$$  
belong to $W^{1-1/p,p}_\loc(\HH)$.
\end{theorem}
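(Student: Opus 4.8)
The plan is to reduce the three assertions to Theorem~\ref{PG-2} and Corollary~\ref{PG-2b} applied to the Levi--Civita connection $\nabla$ of $g$. The only genuinely analytic work is to verify that $\nabla$ meets the hypotheses of those results, after which (1) and (2) are immediate and (3) is a contraction computation. In a smooth frame the coefficients of $\nabla$ are, by the Koszul formula, of the schematic form $g^{-1}\,\partial g$. Since $g$ is continuous and uniformly non-degenerate on compact sets, $g^{-1}$ is continuous, hence locally bounded, with $g^{-1}|_{\MM^\pm}\in W^{2,p}_\loc(\MM^\pm)$. From $g|_{\MM^\pm}\in W^{2,p}_\loc(\MM^\pm)$ one has $\partial g\in W^{1,p}_\loc(\MM^\pm)$, and the Sobolev embedding $W^{1,p}\hookrightarrow L^{2p}$, valid exactly when $p\ge m/2$, gives $\partial g\in L^{2p}_\loc(\MM^\pm)$. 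Consequently the quadratic term $(\partial g)^2$ lies in $L^p_\loc$ and $\partial(g^{-1}\,\partial g)\in L^p_\loc(\MM^\pm)$, so $\nabla^\pm\in W^{1,p}_\loc(\MM^\pm)$ up to $\HH$; moreover $\partial g\in L^{2p}_\loc\subset L^2_\loc$ on each side with at worst a jump across $\HH$, whence $\nabla\in L^2_\loc(\MM)$. Thus Theorem~\ref{PG-2} applies and the traces of $\nabla^\pm$ on $\HH$ lie in $W^{1-1/p,p}_\loc(\HH)$.

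Given this, parts (1) and (2) follow from Corollary~\ref{PG-2b}. For (1), the singular part of $\Riem$ vanishes iff $\nabla$ is continuous across $\HH$; since $g^{-1}$ is continuous, the Koszul formula shows that $[\nabla]_\HH=0$ iff the jump $[\partial g]_\HH$ of the first derivatives of $g$ vanishes (the tangential derivatives being automatically continuous because $g$ is), i.e. iff $g$ attains the global Sobolev regularity asserted on $\MM$. Part (2) is simply the restatement of Corollary~\ref{PG-2b}(2) for the connection $\nabla=\nabla(g)$, and so requires nothing further.

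For (3) I would take the Ricci jump formula of Theorem~\ref{PG-2}(3) with $X=E_{(\alpha)}$, $Y=E_{(\beta)}$, multiply by $g^{\alpha\beta}$, and sum. This product is legitimate precisely because $g^{\alpha\beta}$ is continuous: it maps the $L^p_\loc$ regular part into $L^p_\loc$ (using $p\ge m/2$ and $g^{\alpha\beta}\in L^\infty_\loc$), and it multiplies the $\Dirac_\HH$-supported singular part through its well-defined boundary value on $\HH$. Since $\la\Dirac_\HH,E_{(i)}\ra=0$, only the terms carrying $\la\Dirac_\HH,E_{(m)}\ra$ survive; splitting each index sum into its tangential range $j=1,\dots,m-1$ and the value $m$, the two contributions containing $\nabla_{E_{(m)}}E_{(\beta)}$ cancel, and pulling the continuous factors $g^{m\beta},g^{j\beta}$ inside the jump bracket produces exactly $[\la g^{m\beta}E^{(j)}-g^{j\beta}E^{(m)},\nabla_{E_{(j)}}E_{(\beta)}\ra]_\HH$. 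The regularity of the pieces is then inherited: the regular parts lie in $L^p_\loc$ by the first paragraph and the $L^p_\loc$ statement of Theorem~\ref{PG-2}, while the surviving jump coefficient is a combination, with continuous $W^{2,p}_\loc(\MM^\pm)$ factors $g^{\alpha\beta}$, of traces of $\nabla^\pm\in W^{1,p}_\loc(\MM^\pm)$, hence lies in $W^{1-1/p,p}_\loc(\HH)$.

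I expect the single real obstacle to be the \emph{regularity step}: the threshold $p\ge m/2$ is exactly what makes the quadratic curvature term $(\partial g)^2$ locally integrable, via $W^{1,p}\hookrightarrow L^{2p}$, and this is the only place where a genuine function-space estimate enters. Everything else rests on the algebra of the jump relations of Theorem~\ref{PG-2} together with the continuity of $g$ and $g^{-1}$, which is precisely what renders the otherwise ill-defined products of the metric against the singular curvature distributions meaningful.
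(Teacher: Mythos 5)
Your proposal is correct and follows essentially the same route as the paper: reduce everything to Theorem~\ref{PG-2} and Corollary~\ref{PG-2b} applied to the Levi--Civita connection, read off part (1) from the Koszul formula (continuity of $\nabla$ across $\HH$ is equivalent to continuity of the first derivatives of $g$), and obtain part (3) by contracting the Ricci jump formula with $g^{\alpha\beta}$, using $\la\Dirac_\HH,E_{(j)}\ra=0$ and the continuity of $g^{\alpha\beta}$ to pull it inside the jump bracket. Your explicit verification of the hypotheses of Theorem~\ref{PG-2}, namely that $\nabla^\pm\in W^{1,p}_\loc(\MM^\pm)$ and $\nabla\in L^2_\loc(\MM)$ via the embedding $W^{1,p}\hookrightarrow L^{2p}$ for $p\ge m/2$, is actually more careful than the paper's own terse remark on this point.
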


\begin{proof} First of all, we have seen in Theorem~\ref{PG-2} that the singular part of 
the Riemann curvature vanishes if and only if $\nabla$ is continuous across $\HH$. On the other hand, 
the dual Koszul formula shows that $\nabla$ is continuous if and only if 
the first order derivatives of $g$ are continuous across $\HH$.   
(To see this, it suffices to make particular choices of the fields $X,Y,Z$.)  

The second result is a direct consequence of Theorem~\ref{PG-2}. 

Third, by using the expression of the Ricci tensor given by Theorem~\ref{PG-2} 
together with the definition of the scalar curvature, we  obtain 
$$
\aligned 
R = R^{reg}  
& + g^{\tau\beta}[\la E^{(\alpha)},\nabla_{E_{(\alpha)}}E_{(\beta)} \ra]_\HH \la\Dirac_\HH, E_{(\tau)}\ra 
\\
& - g^{\tau\beta}[\la E^{(m)},\nabla_{E_{(\tau)}} E_{(\beta)} \ra]_\HH \la\Dirac_\HH, E_{(m)}\ra. 
\endaligned 
$$
But, $\la\Dirac_\HH, E_{(j)}\ra=0$ for all $j=1,2,\ldots,\dH$, so 
$$
R = R^{reg} + [\la g^{m\beta} E^{(\alpha)}- g^{\alpha\beta}E^{(m)} ,\nabla_{E_{(\alpha)}}E_{(\beta)}\ra]_\HH \la\Dirac_\HH, E_{(m)}\ra.  
$$
The desired formula for the scalar curvature follows. 
The regularity of the curvature tensors follows from Theorem~\ref{PG-2}, 
since the connection $\nabla$ associated with such a metric is of class $W^{1,1}_\loc(\MM^\pm)$
if $g \in H^1_\loc \cap W^{1,p}_\loc$.  
\end{proof}

\subsection{Vacuum spacetimes}

In the vacuum, the Einstein equations reduce to the Ricci-flat condition
$$
\Ric = 0
$$ 
This condition restricts the type of jumps that are allowed across a hypersurface. Consider, for instance, 
a smooth hypersurface $\HH \subset \MM$ separating two submanifolds with boundary $\HH$ in a smooth manifold $\MM=\MM^- \cup \MM^+$. 
If the metric $g$ is continuous over $\MM$, smooth in $\MM^\pm$, but has discontinuous derivatives 
across the hypersurface $\HH$, then the Ricci-flat condition implies that, 
for the Levi-Civita connection induced by $g$ and for all $Y\in\XX(\MM)$,  
$$
\la E^{(m)},\nabla_{E_{(j)}} Y\ra \text{ and } \la E^{(j)},\nabla_{E_{(j)}} Y\ra \text{ are continuous across } \HH.
$$
In the particular case that the hypersurface $\HH$ is nowhere null, this condition implies that the metric $g$ 
must be at least of class $\Ccal^1$. On the other hand, if $\HH$ is a null hypersurface, 
then the derivatives of the metric $g$ need not be continuous across $\HH$.

%-----------------------------------------------------------------------------------------------------------------
%-----------------------------------------------------------------------------------------------------------------

 \section{Geometry induced on a hypersurface by a connection}
\label{HY-0}

\subsection{Preliminaries} 
\label{ss51}

We now turn our discussion to the geometry of smooth and oriented hypersurfaces 
within a connected, oriented, $\Ccal^\infty$-diffe\-ren\-tiable $m$-manifold $\MM$. 
The arguments presented below also apply to the case that $\HH$ is the boundary of a manifold with boundary. 
We begin our discussion by fixing 
a differentiable manifold endowed with a connection $\nabla$ with limited regularity and, in Section~\ref{CU-0}
below, 
we specialize the results to the case that the connection is determined by a Lorentzian metric $g$.  

Several difficulties arise with Lorentzian metrics, which contrasts with what happens with Riemannian metrics. 
First, a Lorentzian metric on a manifold does not always induce a non-degenerate metric on a hypersurface 
via the usual pull-back.
Second, a connection on a manifold does not induce directly any useful geometry on a hypersurface, 
since, in general, $\nabla_X Y\not\in T\HH$ even if $X,Y\in T\HH$. 
Moreover, no canonical projection of $T_x\MM$ on $T_x\HH$ is available in general. 

Recall that prescribing a connection is equivalent to prescribing a parallel transport of tangent vectors. 
Therefore, the choice of a parallel transport on a manifold clearly does not induce a rule for parallel transporting vectors 
tangent to a submanifold.  
Our objective will be to circumvent these difficulties by using a suitable concept of 
projection from $T_x\MM$ into $T_x\HH$. 

Throughout our investigation, it will be important to distinguish between several operations of ``restriction.'' 
On one hand, given a (scalar, vector field, general tensor) field 
defined at every point of $\MM$ we can obviously restrict it to points on the hypersurface $\HH$. On the other hand, 
given a tensor field defined on the whole tangent and cotangent spaces $T_x\MM, T_x^\star\MM$, $x\in\HH$, 
we can restrict it to the spaces $T_x\HH, T_x^\star\HH$. Both restrictions arise in our discussion. 

All formulas will be expressed in a coordinate-free form and, when necessary, all calculations will be performed 
in a moving frame of the tangent space and in its dual frame of the cotangent space. The components in these moving frames of 
a vector field $X$ and a $1$-form field $\omega$ will be denoted by $X^\alpha$ and $\omega_\alpha$, respectively. 
Indices are raised and lowered by using the metric tensor $g_{\alpha\beta}$ and its inverse, denoted by $g^{\alpha\beta}$. 
In particular, the functions $X_\beta=g_{\beta\alpha}X^\alpha$ defined in this fashion are the components of a $1$-form, 
denoted $X^\flat$, and the functions $\omega^\beta=g^{\beta\alpha}\omega_\alpha$ are the components of a vector field, denoted $\omega^\sharp$.

%------------------------------------------------------------------------------------------------------------

\subsection{Rigging field and projection operators}
\label{ss52} 

We consider the tangent and cotangent spaces $T_x\MM, T_x^\star\MM$ at $x\in\MM$ and 
the tangent and cotangent spaces $T_x\HH$
and $T_x^\star\HH$ at $x\in\HH$. 
Recall first that, for every $x \in \HH$, the tangent space of $\HH$ at $x$
can be viewed as a subspace of the tangent space of $\MM$, 
\be
\label{tangent}
T_x\HH  \subset T_x\MM,
\ee
since $T_x \HH$ consists of (equivalence classes of) paths restricted to lie in $\HH \subset \MM$. 
By contrast, no similar canonical inclusion is available for the cotangent space. 
Indeed, a form $\alpha \in T_x^\star\HH$
is defined solely on $T_x\HH$ and cannot be canonically extended to the whole of $T_x\MM$.

In view of \eqref{tangent}, 
the most fundamental object one can associate to $\HH$ is a {\sl normal form} $n \in \YY(\MM)$, 
which is a $1$-covariant tensor field $x \mapsto n_x$ defined on the whole of $\MM$ 
and whose restriction on $\HH$ is uniquely characterized (up to a scalar multiplicative factor) by the conditions  
($x \in \HH$) 
\be
\la n_x, X \ra \, 
\begin{cases} 
= 0,     &  X \in T_x\HH, 
\\
\neq 0,    &  X \notin T_x \HH. 
\end{cases}  
\label{HY.normalform}
\ee

To determine a canonical decomposition of the cotangent space, the normal form must be supplemented by a rule to 
identify $T^\star\HH$ to a subset of $T^\star\MM$. This motivates the following definition which was discussed 
in \cite{BarrabesIsrael,ClarkeDray,Taub} and, more recently, in Mars and Senovilla \cite{MarsSenovilla}.  

\begin{definition} 
\label{HY-rigging}
A {\rm rigging vector field} along $\HH$ is a vector field $x \in \HH \mapsto \ell_x \in T_x\MM$ satisfying 
$$
\ell_x \notin T_x \HH, \qquad \la n_x, \ell_x\ra =1. 
$$
\end{definition}

The prescription of a rigging $\ell$ allow us to decompose the tangent space to $\MM$ at 
a point of the hypersurface, as follows 
\be 
T_x \MM = \Vect (\ell_x) \oplus T_x\HH, \qquad x \in \HH,  
\label{HY.directtangent}
\ee 
where $\Vect (\ell_x)$ is the vector space generated by $\ell_x$. 
Hence, given any point $x \in \HH$, to any vector $X_x \in T_x\MM$ one can associate 
its {\sl rigging projection} (or projection in the direction of the rigging), 
$\ot X_x \in T_x\HH$, so that 
$$
X = \la n, X \ra \, \ell +  \ot X. 
$$

Analogously, we can decompose the cotangent space to $\MM$ at a point of the hypersurface, 
as follows 
\be
T_x^\star\MM = \Vect (n_x) \oplus T_x^\star\HH,  \qquad x \in \HH, 
\label{HY.directcotangent}
\ee 
where we are now able to identify the fibers of the cotangent bundle $T^\star\HH$ 
with fibers of the (restriction to $\HH$ of the) cotangent bundle $T^\star\MM$,  
\be
\label{dualspace}
T_x^\star\HH = \big\{ \theta \in T_x^\star\MM \, / \,  \la \theta, \ell_x \ra = 0 \big\}, \qquad x \in \HH. 
\ee
Hence, given any point $x \in \HH$ and any $1$-form $\theta$,
we determine the {\sl normal projection} (or projection 
in the direction of the normal form), $\ut\theta \in T^\star \HH$, so that 
$$
\theta = \la \theta, \ell \ra \, n +  \ut \theta.
$$ 

It must be observed that both the projection operators above involve the normal form and 
do depend on the choice of the rigging vector. However, the projected form $\undertilde \theta$ 
is independent of this choice, since 
$$
\la \undertilde \theta, X \ra = \la \theta, X \ra, \qquad X \in T\HH. 
$$ 

%-----------------------------------------------------------------------------------------------------------

\subsection{Projections expressed in a local frame}
\label{ss53}

We now introduce bases of the tangent and cotangent spaces that are adapted to the projection operators. 
At each $x \in \HH$, we supplement the vector $\ell_x$ with a basis $E_{(a),x}$ ($a=1,2,\ldots,m-1$) of the tangent space $T_x\HH$. 
The form $n_x$ is then naturally supplemented with the corresponding dual basis 
$\Edual^{(a)}_x$ ($a=1,2,\ldots,m-1$), so that the frames are characterized by the orthogonality conditions 
\be
\aligned 
& \ell^\alpha n_\alpha = 1, \qquad \ell^\alpha \Edual^{(a)}_\alpha =0,  
\\
& E_{(a)}^\alpha  n_\alpha=0,  \qquad E_{(a)}^\alpha \Edual^{(b)}_\alpha = \delta_a^b, 
\endaligned 
\label{HY.basis}
\ee
where $\delta_a^b$ is the standard Kronecker symbol. 

To compute the projections, we introduce the tensor
$$
{P_\alpha}^\beta := {\delta_\alpha}^\beta - n_\alpha \, \ell^\beta, 
$$
so that the components of the projection of a vector $X$ and a form $\theta$ read  
\be
\tX^\alpha := {P_\beta}^\alpha \, X^\beta, \qquad \uttheta_\alpha = {P_\alpha}^\beta \theta_\beta. 
\label{HY-projection}
\ee 
Clearly, the projections of the vector $\ell$ and the form $n$ vanish identically. 
The projected vectors and forms lie in $T_x\HH$ and $T_x^\star\HH$, respectively and, therefore, 
can be alternatively expressed in the corresponding bases $E_{(a)}$ and $\Edual^{(a)}$, respectively. 
So, we will also write $\ot X=\ot X^a E_{(a)}$ and $\ut \theta=\ut \theta_a \Edual^{(a)}$, where  
the components of the projected vectors and forms are determined by 
\be 
\tX^a := X^\alpha \Edual_\alpha^{(a)}, \qquad \uttheta_a := \theta_\alpha E_{(a)}^\alpha. 
\label{HY-projection1} 
\ee
Recall that Greek and Latin indices describe $1,\ldots,m$ and $1$, $\ldots$, ${m-1}$, respectively.

%-------------------------------------------------------------------------------------------------------------

\subsection{Connections on $\HH$ induced by projecting $\nabla$}
\label{ss54}

The projection operators defined in Section \ref{ss52} lead us naturally to introduce a connection on the hypersurface 
which depends on the choice of the rigging vector $\ell$. Denoted by $\ut \nabla$, this connection is defined by simply 
projecting the original connection $\nabla$, i.e., 
\be
\utnabla_X Y := \widetilde{\nabla_X Y}, \qquad X,Y \in \TT^1_0(\HH). 
\label{HY-derivative}
\ee
We will refer to it as  
the {\sl projected connection.} 
The tilde symbol is placed below, and this notation will be justified in Subsection~\ref{ss65}, that is, 
when $\nabla$ is the Levi-Cevita connection associated with a metric $g$ and $\ell$ is the normal vector field to a 
non-null hypersurface $\HH$, then $\utnabla$ is the Levi-Cevita connection associated with the pull-back $\utg$ of the metric $g$.

\begin{proposition}[Properties of the projected connection] 
\label{HY-connection} 
1. $\utnabla$ is a connection operator, i.e., satisfies the linearity and Leibnitz properties
$$
\aligned 
& \utnabla_{\lam X + \lam' X'} Y = \lam \utnabla_X Y + \lam' \utnabla_{X'}Y, 
\\
& \utnabla_X (Y+Y')=\utnabla_X Y +\utnabla_X Y', 
\quad 
\utnabla_X (\lam\, Y) = \lam \, \utnabla_X Y + X(\lam) \, Y, 
\endaligned 
$$
for all vector fields $X,X',Y,Y' \in \TT^1_0(\HH)$ and all smooth functions $\lam, \lam' : \HH \to \RR$. 

2. If $\nabla$ is of class $W^{k,p}_\loc(\MM)$ for some $k,p\geq 1$, then $\utnabla$ is of class $W^{k-1/p,p}_\loc(\HH)$.

3. The torsion of $\utnabla$ satisfies 
\be
\undertilde\Tor(X,Y)=\widetilde{\Tor(X,Y)}, \qquad X,Y \in \TT^1_0(\HH).
 \label{G-torsion}
\ee
As a consequence, if the connection $\nabla$ has zero torsion, then the projected connection $\utnabla$ also has zero torsion.  
\end{proposition}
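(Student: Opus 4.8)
The plan is to reduce all three assertions to two elementary facts: that the rigging projection $X \mapsto \widetilde X$ is, at each point of $\HH$, a (pointwise $\Ccal^\infty(\HH)$-linear) operator, being by \eqref{HY-projection} the action of the tensor ${P_\beta}^\alpha = {\delta_\beta}^\alpha - n_\beta \ell^\alpha$; and that $\nabla$ already obeys the linearity and Leibnitz identities of Definition~\ref{covar}. Before using \eqref{HY-derivative} I would first record its well-definedness: since $X$ is tangent to $\HH$, the trace of $\nabla_X Y$ on $\HH$ is determined by $Y|_\HH$ alone (working in an adapted frame, only tangential derivatives of the components of $Y$ enter), so the right-hand side does not depend on the chosen extension of $Y$ off $\HH$.

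Claim~1 then follows by transporting the connection axioms through the projection. Linearity in the first slot gives $\widetilde{\nabla_{\lam X + \lam' X'}Y} = \widetilde{\lam\nabla_X Y + \lam'\nabla_{X'}Y} = \lam\,\widetilde{\nabla_X Y} + \lam'\,\widetilde{\nabla_{X'}Y}$, where the scalars pass through $P$ because $P$ is tensorial; additivity in the second slot is identical. For the Leibnitz rule one computes $\utnabla_X(\lam Y) = \widetilde{\lam\nabla_X Y + (X\lam)Y} = \lam\,\widetilde{\nabla_X Y} + (X\lam)\,\widetilde Y$; since $Y \in \TT^1_0(\HH)$ is already tangent one has $\widetilde Y = Y$, and since $X$ is tangent the scalar $X\lam$ is intrinsic to $\HH$, which produces exactly the stated correction term $(X\lam)Y$.

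For Claim~2 the only new ingredient is the trace theorem for Sobolev spaces. If $\nabla \in W^{k,p}_\loc(\MM)$ then $\nabla_X Y \in W^{k,p}_\loc(\MM)$ for smooth $X,Y$, so its trace on the smooth hypersurface $\HH$ lies in $W^{k-1/p,p}_\loc(\HH)$; as the rigging data $n,\ell$, and hence $P$, are smooth along $\HH$, applying $P$ preserves this regularity, giving $\utnabla_X Y \in W^{k-1/p,p}_\loc(\HH)$.

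For Claim~3 I would compute the torsion of $\utnabla$ directly, writing $\undertilde{\Tor}(X,Y) = \utnabla_X Y - \utnabla_Y X - [X,Y]_\HH$, where $[X,Y]_\HH$ is the intrinsic bracket on $\HH$. The decisive point is that the tangent distribution to $\HH$ is involutive: for $X,Y \in \TT^1_0(\HH)$ the ambient bracket $[X,Y]_\MM$ is again tangent to $\HH$ and restricts there to $[X,Y]_\HH$, so $[X,Y]_\HH = \widetilde{[X,Y]_\MM}$ because the projection fixes tangent vectors. Using the $\RR$-linearity of $P$ one then gets
$$
\undertilde{\Tor}(X,Y) = \widetilde{\nabla_X Y} - \widetilde{\nabla_Y X} - \widetilde{[X,Y]_\MM} = \widetilde{\nabla_X Y - \nabla_Y X - [X,Y]_\MM} = \widetilde{\Tor(X,Y)},
$$
which is \eqref{G-torsion}; the consequence is immediate, since $\Tor = 0$ forces $\undertilde{\Tor}(X,Y) = \widetilde 0 = 0$. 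I expect the one genuinely geometric step --- and therefore the point to get right --- to be this involutivity (naturality of the Lie bracket under $\HH \hookrightarrow \MM$), which is what makes the projection act trivially on the bracket and lets the torsion identity close; by contrast the linearity/Leibnitz checks and the trace estimate are routine once the definitions are unwound.
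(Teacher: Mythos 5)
Your proposal is correct and follows essentially the same route as the paper's own proof: part 2 via the Sobolev trace theorem applied to $\nabla_X Y$ followed by the smooth projection \eqref{HY-projection}, and part 3 via the fact that the ambient Lie bracket of two fields tangent to $\HH$ is again tangent to $\HH$ and coincides there with the intrinsic bracket, so that $\undertilde\Tor(X,Y)$ is the projection of $\Tor(X,Y)$. The only difference is that you also spell out part 1 and the extension-independence of the defining formula \eqref{HY-derivative}, routine points the paper leaves implicit.
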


\begin{proof} Since $\utnabla_X Y$ is the projection of $\nabla_X Y$ on $T\HH$, in order to prove the second assertion of 
the theorem it suffices to prove the following property:
if a vector field $Z$ belongs to $W^{k,p}_\loc T^1_0(\MM)$, then $\ut Z$ belongs to $W^{k-1/p,p}_\loc T^1_0(\HH)$. 
However, this porperty follows immediately from \eqref{HY-projection} and the usual trace properties of Sobolev functions. 

Since the Lie bracket on $\HH$ of two vector fields $X,Y\in \TT^1_0(\HH)$ coincides with the Lie bracket on $\MM$ of the same 
vector fields, the vector field $(\utnabla_X Y - \utnabla_Y X - [X,Y])\in T\HH$ 
is the projection on $T\HH$ of the vector field $(\nabla_X Y - \nabla_Y X - [X,Y])\in T\MM$. This means that 
$\undertilde\Tor(X,Y)=\widetilde{\Tor(X,Y)}$. In particular, this relation shows that if the connection $\nabla$ has zero torsion, 
then the projected connection $\utnabla$ on $T\HH$ also has zero torsion. 

\end{proof}

%----------------------------------------------------------------------------------------------------

\subsection{Second fundamental form}
\label{ss55} 

Let us now introduce the {\sl second fundamental form}
(also called the ``shape tensor'') of the hypersurface as the 2-covariant tensor field $K$ defined by
$$
K(X,Y) := \la \nabla_X n, Y\ra, \quad  X,Y\in \TT^1_0(\HH). 
$$
Since $\la \nabla_X n, Y \ra = \nabla_X(\la n, Y\ra)-\la n, \nabla_X Y\ra$, we also have 
$$
K(X,Y)=-\la n, \nabla_X Y\ra , \quad  X,Y\in \TT^1_0(\HH).
$$
The tensor $K$ is the pull-back of the 2-covariant tensor field $\nabla n$ of $\MM$, where $n\in \TT^0_1(\MM)$ is any 
extension of the normal form $n$ outside the hypersurface (such an extension always exists and the definition of $K$ 
does not depend on the choice of the extension). 

\begin{lemma}
\label{HY-frame} 
1. On the hypersurface $\HH$, the connection $\nabla$ can be expressed in terms of 
 $\utnabla$ and $K$, as follows~:  
\be
\nabla_X Y = \utnabla_X Y - K(X,Y) \, \ell, \quad  X,Y \in \TT^1_0(\HH).
\label{HY-derivative1}
\ee
2. If $\nabla$ is of class $W^{k,p}_\loc(\MM)$ for some $k,p\geq 1$, 
then $K$ is of class $W^{k-1/p,p}_\loc(\HH)$. \\
3. The second fundamental form satisfies the relation 
$$
K(Y,X)-K(X,Y)=\la n, \Tor(X,Y) \ra , \quad X,Y \in \TT^1_0(\HH). 
$$ 
In particular, if $\nabla$ has zero torsion, then the second fundamental form is a symmetric $2$-covariant tensor.
\end{lemma}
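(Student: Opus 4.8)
The plan is to prove the three parts in order, treating part~1 as the structural identity from which part~3 follows almost for free. For part~1 I would invoke the direct-sum decomposition \eqref{HY.directtangent} of $T_x\MM$ along the rigging: for $X,Y\in\TT^1_0(\HH)$ the vector $\nabla_X Y$ lies in $T\MM$ and therefore splits as $\nabla_X Y=\la n,\nabla_X Y\ra\,\ell+\widetilde{\nabla_X Y}$. Two substitutions then finish the argument: the tangential component $\widetilde{\nabla_X Y}$ is by definition \eqref{HY-derivative} the projected connection $\utnabla_X Y$, and the coefficient $\la n,\nabla_X Y\ra$ equals $-K(X,Y)$ by the alternative expression for $K$ recorded just before the lemma. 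This yields $\nabla_X Y=\utnabla_X Y-K(X,Y)\,\ell$ at once.

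For part~2 I would begin from $K(X,Y)=\la\nabla_X n,Y\ra$ and observe that $n\in\YY(\MM)$ is smooth, so the only rough ingredient is the connection itself. In a local frame the components of $\nabla_X n$ are combinations of the connection coefficients, which are of class $W^{k,p}_\loc(\MM)$, multiplied by smooth factors built from $n$; hence $\nabla n$ is a $(0,2)$-tensor field on $\MM$ of class $W^{k,p}_\loc$. Since $K$ is the pull-back of $\nabla n$ to $\HH$, its components in an adapted frame are traces on $\HH$ of $W^{k,p}_\loc(\MM)$ functions, and the claimed regularity $K\in W^{k-1/p,p}_\loc(\HH)$ follows from the standard Sobolev trace theorem --- exactly the argument already used for the projected connection in part~2 of Proposition~\ref{HY-connection}. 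This is the only genuinely analytic step; the care required is simply to realize the pull-back as a bona fide trace so that the loss of $1/p$ derivatives is correctly accounted for.

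For part~3 I would use the equivalent expression $K(X,Y)=-\la n,\nabla_X Y\ra$ to isolate the antisymmetric part, obtaining $K(Y,X)-K(X,Y)=\la n,\nabla_X Y-\nabla_Y X\ra$. Substituting the torsion in the form $\nabla_X Y-\nabla_Y X=\Tor(X,Y)+[X,Y]$ coming from \eqref{G-notorsion} gives $\la n,\Tor(X,Y)\ra+\la n,[X,Y]\ra$. The key structural point is that for $X,Y\in\TT^1_0(\HH)$ the Lie bracket $[X,Y]$ is again tangent to $\HH$, so by the defining property \eqref{HY.normalform} of the normal form the bracket term vanishes identically, leaving $K(Y,X)-K(X,Y)=\la n,\Tor(X,Y)\ra$. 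Symmetry of $K$ under the zero-torsion hypothesis is then immediate. Overall none of the three parts poses a real obstacle: parts~1 and~3 are algebraic consequences of the decomposition and of the tangency of the bracket, while part~2 reduces to the trace bookkeeping described above.
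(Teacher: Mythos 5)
Your proposal is correct and takes essentially the same route as the paper: part 1 from the rigging decomposition \eqref{HY.directtangent} together with $K(X,Y)=-\la n,\nabla_X Y\ra$, part 2 from the Sobolev trace argument underlying Proposition~\ref{HY-connection}, and part 3 from the tangency of $[X,Y]$ to $\HH$ killing the bracket term against the normal form. No gaps.
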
 

\begin{proof} 
The decomposition \eqref{HY.directtangent} of the tangent space $T\MM$ shows that 
$$
\nabla_X Y = \utnabla_X Y + \la n,\nabla_X Y\ra \, \ell. 
$$
Combining this relation with the definition of the second fundamental form gives the formula \eqref{HY-derivative1}.

The regularity of $K(X,Y)$, $X,Y\in \TT^1_0(\HH)$, follows from Proposition~\ref{HY-connection}. 

Since $[X,Y]\in \TT^1_0(\HH)$ for all $X,Y\in \TT^1_0(\HH)$, we have 
$$
\aligned
K(Y,X)-K(X,Y) & =\la n,\nabla_X Y-\nabla_Y X\ra \\
              & =\la n,[X,Y]+\Tor(X,Y)\ra
               =\la n,\Tor(X,Y)\ra.
\endaligned
$$
\end{proof}

Beside the second fundamental form, we also introduce the Christoffel symbols
associated with $\nabla$.  
Decomposing the vector fields $\nabla_{E_{(a)}}E_{(b)}\in T\MM$ 
and $\nabla_{E_{(a)}}\ell \in T\MM$ on the basis $\{E_{(c)},\ell\}$, we set 
\be
\aligned
&  \nabla_{E_{(a)}} E_{(b)} = \Gamma_{ab}^c \, E_{(c)} - K_{ab} \, \ell,
\\
&  \nabla_{E_{(a)}} \ell = L_a^c \, E_{(c)} + M_a \, \ell,
\endaligned
\label{HY-movframe}
\ee
where the coefficients $\Gamma_{ab}^c$, $K_{ab}$, $L_a^c$, and $M_a$ are functions defined on $\HH$. 
In turn, these equations give the following decomposition of $1$-form fields:
\be
\aligned
&  \nabla_{E_{(a)}} \Edual^{(b)} = - \Gamma_{ac}^b \, \Edual^{(c)} - L_a^b \, n,
\\
&  \nabla_{E_{(a)}} n = K_{ac} \, \Edual^{(c)} - M_a \, n.
\endaligned
\label{HY-movdualframe}
\ee

From the equation \eqref{HY-derivative1} and the fact that the connection $\utnabla$ has zero torsion, we immediately
deduce that~: 

\begin{lemma}
\label{HY-componentwise}
The projected connection $\utnabla$, the second fundamental form $K$, and the Lie bracket are related to the above 
coefficients via the following formulas:
$$
\aligned 
& \utnabla_{E_{(a)}}E_{(b)}=\Gamma_{ab}^c E_{(c)}, \quad
K(E_{(a)},E_{(b)})=K_{ab}, 
\\
& [E_{(a)},E_{(b)}]=\left(\Gamma_{ab}^c-\Gamma_{ba}^c\right)E_{(c)}.
\endaligned 
$$
\end{lemma}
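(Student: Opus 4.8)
The plan is to read off all three identities from the two decompositions of $\nabla_{E_{(a)}}E_{(b)}$ that are already at our disposal, exploiting the uniqueness of the splitting \eqref{HY.directtangent}. First I would compare the defining relation \eqref{HY-derivative1}, specialized to the frame vectors, namely
$$
\nabla_{E_{(a)}}E_{(b)} = \utnabla_{E_{(a)}}E_{(b)} - K(E_{(a)},E_{(b)}) \, \ell,
$$
with the component expansion \eqref{HY-movframe},
$$
\nabla_{E_{(a)}}E_{(b)} = \Gamma_{ab}^c \, E_{(c)} - K_{ab} \, \ell.
$$
Since $\utnabla_{E_{(a)}}E_{(b)}$ and $\Gamma_{ab}^c E_{(c)}$ both lie in $T\HH$, while the two $\ell$-terms lie in $\Vect(\ell)$, the direct sum decomposition $T_x\MM = \Vect(\ell_x) \oplus T_x\HH$ forces the tangential and rigging components to agree separately. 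This yields at once $\utnabla_{E_{(a)}}E_{(b)} = \Gamma_{ab}^c E_{(c)}$ and $K(E_{(a)},E_{(b)}) = K_{ab}$.

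For the bracket formula I would invoke the vanishing of the torsion of $\utnabla$ established in Proposition~\ref{HY-connection}: under the present assumption that $\nabla$ is torsion-free, the projected connection also has zero torsion, so $[X,Y] = \utnabla_X Y - \utnabla_Y X$ for all $X,Y \in \TT^1_0(\HH)$. Applying this to $X = E_{(a)}$, $Y = E_{(b)}$ and substituting the first identity just obtained gives
$$
[E_{(a)},E_{(b)}] = \Gamma_{ab}^c E_{(c)} - \Gamma_{ba}^c E_{(c)} = (\Gamma_{ab}^c - \Gamma_{ba}^c) \, E_{(c)},
$$
as claimed.

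There is no genuine obstacle here: the result is a bookkeeping consequence of the uniqueness of the splitting \eqref{HY.directtangent} and of the torsion-free property proved earlier. The only point demanding care is the sign convention in \eqref{HY-movframe}, where the coefficient of $\ell$ is written as $-K_{ab}$; matching it against the $-K(E_{(a)},E_{(b)})$ appearing in \eqref{HY-derivative1} is what produces the identity $K(E_{(a)},E_{(b)}) = K_{ab}$ with the correct sign. One must also keep in mind that $[E_{(a)},E_{(b)}]$ is automatically tangent to $\HH$, being the bracket of two fields in $\TT^1_0(\HH)$, which is precisely what makes the purely tangential right-hand side consistent and confirms, along the way, the symmetry $K_{ab}=K_{ba}$ recorded in Lemma~\ref{HY-frame}.
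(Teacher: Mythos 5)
Your proof is correct and follows the same route as the paper, which deduces the lemma precisely ``from the equation \eqref{HY-derivative1} and the fact that the connection $\utnabla$ has zero torsion'': you compare \eqref{HY-derivative1} with \eqref{HY-movframe} via the uniqueness of the splitting \eqref{HY.directtangent}, then use the torsion-free property of $\utnabla$ from Proposition~\ref{HY-connection} for the bracket identity. You merely spell out the bookkeeping that the paper leaves implicit, and your closing remark on recovering the symmetry $K_{ab}=K_{ba}$ is a correct, consistent observation.
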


%-----------------------------------------------------------------------------------------------------------------
\subsection{Gauss and Codazzi equations} 
\label{ss56}

We now turn to the discussion of the properties of the Riemann curvature tensors $\Riem$ and $\utRiem$, which are 
naturally associated with the connections $\nabla$ and $\utnabla$, respectively. We assume for simplicity that $\nabla$ has zero torsion. 
 
Before we can relate the curvature tensor $\Riem$ on $\MM$ with the curvature tensor $\utRiem$ on the hypersurface $\HH$ 
we first recall the definition of the Riemann tensor: 
\be
\Riem(X,Y)Z = \nabla_X \nabla_Y Z - \nabla_Y \nabla_X Z - \nabla_{[X,Y]} Z, 
\label{HY-Riemann}
\ee
where $[X,Y]$ is the Lie bracket. Choosing now $X,Y,Z\in \TT^1_0(\HH)$ and using of \eqref{HY-derivative1},
 we compute 
$$
\aligned
\nabla_X \nabla_Y Z 
& = \nabla_X \left(\utnabla_Y Z -K(Y,Z)\,\ell\right) = \utnabla_X \utnabla_Y Z-K(X,\utnabla_Y Z)\,\ell - \nabla_X(K(Y,Z)\,\ell)
\\
& = \utnabla_X \utnabla_Y Z -K(Y,Z)\nabla_X\ell - \Big( K(X,\utnabla_Y Z) + X(K(Y,Z)) \Big) \ell, 
\endaligned
$$
and 
$$
\nabla_{[X,Y]} Z = \utnabla_{[X,Y]} Z - K([X,Y], Z) \, \ell. 
$$ 
We deduce that
$$
\aligned 
& \Riem(X,Y) Z 
\\
& = \utRiem(X,Y) Z - K(Y,Z)\nabla_X\ell +K(X,Z)\nabla_Y\ell\\
& \quad - \Big( K(X,\utnabla_Y Z) - K(Y,\utnabla_X Z)+ X(K(Y,Z)) - Y(K(X,Z)) - K([X,Y],Z) \Big) \ell.
\endaligned 
$$
Since
$$
\aligned 
& X(K(Y,Z))- K(Y,\utnabla_X Z)=(\utnabla_X K)(Y,Z)+K(\utnabla_X Y,Z),
\\
& Y(K(X,Z))- K(X,\utnabla_Y Z)=(\utnabla_Y K)(X,Z)+K(\utnabla_Y X,Z),
\endaligned 
$$
and since the torsion of the connection $\utnabla$ vanishes, we finally obtain 
\be
\aligned 
\Riem(X,Y) Z  = 
& \utRiem(X,Y) Z - K(Y,Z)\nabla_X\ell +K(X,Z)\nabla_Y\ell
\\
& - \Big( (\utnabla_X K)(Y,Z) - (\utnabla_Y K)(X,Z) \Big) \ell.
\endaligned 
\label{HY-curvatureidentity}
\ee

Another useful relation is derived from \eqref{HY-Riemann}
by taking $X,Y\in \TT^1_0(\HH)$ and $Z=\ell$, that is, 
\be
\aligned
& \Riem(X,Y)\ell 
\\
& = \utnabla_X\left(\ot{\nabla_Y \ell}\right) - \utnabla_Y\left(\ot{\nabla_X \ell}\right) 
    -\ot{\nabla_{[X,Y]}\ell} - \la n,\nabla_X\ell\ra\ot{\nabla_Y\ell} + \la n,\nabla_Y\ell\ra\ot{\nabla_X\ell}
\\
& \quad + \Big( X(\la n,\nabla_Y\ell\ra) - Y(\la n,\nabla_X\ell\ra) - \la n,\nabla_{[X,Y]}\ell\ra 
- K(X,\ot{\nabla_Y\ell}) + K(Y,\ot{\nabla_X\ell}) \Big) \ell.
\endaligned
\label{HY-codazzi}
\ee

We are now in a position to contract the general identity \eqref{HY-curvatureidentity} 
with an arbitrary $1$-form field $\theta$ among $n, \Edual^{(a)}$, while the vectors fields $X,Y,Z$ are 
chosen arbitrarily among $E_{(a)}$. Likewise, the general identity 
\eqref{HY-codazzi} can be contracted with an arbitrary $1$-form field $\theta$ among 
$n, \Edual^{(a)}$, while the vectors fields $X,Y$ can be chosen arbitrarily among $E_{(a)}$. 
As usual, the components of the Riemann curvature tensor $\utRiem$ are defined by 
$$
\utRiem(E_{(a)},E_{(b)})E_{(c)} = \utR^d_{cab} \, E_{(d)}. 
$$ 
The regularity assumption in the next theorem is such that the equalities \eqref{HY-curvatureidentity} 
and \eqref{HY-codazzi} have traces on the hypersurface $\HH$. 

\begin{theorem} 
\label{55} 
Assume that $\nabla$ is of class $W^{k,p}_\loc(\MM)$, with $k\geq 2$ and $p>m/k$. 
Then the Riemann curvature tensor $\Riem$ is of class $W^{k-1,p}_\loc(\MM)$, the connection $\ut \nabla$ 
and the functions $\Gamma_{ab}^c$, $K_{ab}$, $L_a^c$, $M_a$ are of class $W^{k-1/p,p}_\loc(\HH)$, and 
together they satisfy the following relations:  

1. Choosing $X=E_{(a)}, Y=E_{(b)}, Z=E_{(c)}$ and $\theta=\Edual^{(d)}$ in \eqref{HY-curvatureidentity}
one obtains the {\rm Gauss equations} 
\be
\Edual_\delta^{(d)} R^\delta_{\gamma\alpha\beta} E_{(c)}^\gamma E_{(a)}^\alpha E_{(b)}^\beta  
= \utR^d_{cab} + K_{ac} L_b^d - K_{bc} L_a^d. 
\label{HY-Gauss} 
\ee

2. Choosing $X=E_{(a)}, Y=E_{(b)}, Z=E_{(c)}$ and $\theta=n$ in \eqref{HY-curvatureidentity} one obtains the {\rm Codazzi-$1$ equations} 
\be
n_\delta R^\delta_{\gamma\alpha\beta} E_{(c)}^\gamma E_{(a)}^\alpha E_{(b)}^\beta 
= \utnabla_b K_{ac} -  \utnabla_a K_{bc}  + K_{ac} M_b - K_{bc} M_a. 
\label{HY-Codazzi1} 
\ee

3. Choosing $X=E_{(a)}, Y=E_{(b)}$ and $\theta=\Edual^{(c)}$ in \eqref{HY-codazzi} one obtains the {\rm Codazzi-$2$ equations} 
\be
\Edual_\delta^{(d)} R^\delta_{\gamma\alpha\beta} \ell^\gamma E_{(a)}^\alpha E_{(b)}^\beta  
= \utnabla_a L_b^d -  \utnabla_b L_a^d  + L_a^d M_b - L_b^d M_a. 
\label{HY-Codazzi2} 
\ee

4. Choosing $X=E_{(a)}, Y=E_{(b)}$ and $\theta=n$ in \eqref{HY-codazzi} one obtains the {\rm Codazzi-$3$ equations} 
\be
n_\delta R^\delta_{\gamma\alpha\beta} \ell^\gamma E_{(a)}^\alpha E_{(b)}^\beta  
= \utnabla_a M_b -  \utnabla_b M_a  + K_{bc} L_a^c - K_{ac} L_b^c. 
\label{HY-Codazzi3} 
\ee
\end{theorem}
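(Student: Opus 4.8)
The plan is to establish the three assertions in order---the interior regularity of $\Riem$, the boundary regularity of the induced objects, and finally the four sets of equations, the last being obtained by contracting the identities \eqref{HY-curvatureidentity} and \eqref{HY-codazzi} already derived above for fields on which all operations make classical sense. For the regularity of $\Riem$, I would work in a fixed smooth moving frame, in which the Christoffel symbols $\Gamma$ of $\nabla$ belong to $W^{k,p}_\loc(\MM)$ and $\Riem = \del\Gamma + \Gamma\star\Gamma$ schematically. The hypothesis $p>m/k$ is equivalent to $kp>m$, under which $W^{k,p}_\loc(\MM)$ embeds into $\Ccal^0$ and is a Banach algebra; hence the quadratic term $\Gamma\star\Gamma$ again lies in $W^{k,p}_\loc$, while $\del\Gamma$ lies in $W^{k-1,p}_\loc$, so that $\Riem\in W^{k-1,p}_\loc(\MM)$. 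Moreover $(k-1)p > (k-1)m/k \geq m/2 \geq 1$ when $k\geq 2$ and $m\geq 2$, so $(k-1)p>1$ and $\Riem$ admits a well-defined trace on $\HH$; this is exactly the regularity needed for the left-hand sides of \eqref{HY-Gauss}--\eqref{HY-Codazzi3} to make sense.

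Next I would record the boundary regularity. The projected connection $\utnabla$ is of class $W^{k-1/p,p}_\loc(\HH)$ by Proposition~\ref{HY-connection}, and $K$ is of the same class by Lemma~\ref{HY-frame}. The coefficients $\Gamma_{ab}^c$, $K_{ab}$, $L_a^c$, $M_a$ are, by their definition in \eqref{HY-movframe}, the traces on $\HH$ of the frame components of $\nabla_{E_{(a)}}E_{(b)}$ and $\nabla_{E_{(a)}}\ell$, which are $W^{k,p}_\loc(\MM)$ functions; the trace theorem then places them in $W^{k-1/p,p}_\loc(\HH)$.

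For the equations, I would first note that \eqref{HY-curvatureidentity} and \eqref{HY-codazzi} hold classically for smooth connections, and then transfer them to the present setting by a density argument: approximate $\Gamma$ by smooth Christoffel symbols converging in $W^{k,p}_\loc(\MM)$, and pass to the limit using the continuity of the curvature map (again from the algebra property), the continuity of the trace operator $W^{k,p}_\loc(\MM)\to W^{k-1/p,p}_\loc(\HH)$, and the continuity of multiplication on $W^{k-1/p,p}_\loc(\HH)$. Once the two identities hold on $\HH$, I would contract them with the prescribed frame fields. Contracting \eqref{HY-curvatureidentity} with $\theta=\Edual^{(d)}$ annihilates every term proportional to $\ell$ (since $\la\Edual^{(d)},\ell\ra=0$) and, via \eqref{HY-movframe} and Lemma~\ref{HY-componentwise}, turns $\utRiem$ into $\utR^d_{cab}$ and the $\nabla\ell$-terms into the $L^d$-coefficients, giving the Gauss equations \eqref{HY-Gauss}; contracting the same identity with $\theta=n$ kills the $T\HH$-component and retains the $\ell$-coefficient through $\la n,\ell\ra=1$, producing the Codazzi-$1$ equations \eqref{HY-Codazzi1}. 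The two analogous contractions of \eqref{HY-codazzi} with $\Edual^{(d)}$ and with $n$ yield the Codazzi-$2$ and Codazzi-$3$ equations \eqref{HY-Codazzi2}--\eqref{HY-Codazzi3}, the Lie-bracket terms being reduced by Lemma~\ref{HY-componentwise}.

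The hard part will not be the (routine) algebra of these contractions but the justification that the smooth identities \eqref{HY-curvatureidentity} and \eqref{HY-codazzi} survive at the reduced regularity, with every term---in particular the nonlinear curvature expression and the products $K_{ac}L_b^d$, $K_{bc}M_a$, $L_a^d M_b$---admitting a well-defined trace on $\HH$. This is precisely what the hypotheses $k\geq 2$ and $p>m/k$ secure: the algebra property of $W^{k,p}_\loc$ makes the quadratic operations stable and meaningful, while $(k-1)p>1$ guarantees the existence of the trace of $\Riem$. The density and continuity argument then transfers the classical identities, and with them the Gauss--Codazzi relations, to the full $W^{k,p}$ setting.
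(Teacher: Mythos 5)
Your proposal is correct and follows essentially the same route as the paper: the four sets of equations are obtained exactly as in the paper's proof, by contracting the identities \eqref{HY-curvatureidentity} and \eqref{HY-codazzi} with the adapted frame fields and reducing the terms via \eqref{HY-movframe} and Lemma~\ref{HY-componentwise}, with the boundary regularity coming from Proposition~\ref{HY-connection}, Lemma~\ref{HY-frame}, and the trace theorem. The only difference is that you make explicit the Sobolev-algebra, trace, and density justifications for why the two identities survive at the regularity $W^{k,p}_\loc$ with $p>m/k$ --- a point the paper handles implicitly by remarking, just before the theorem, that the stated regularity ensures these identities have traces on $\HH$.
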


\begin{proof} The Gauss and Codazzi-$1$ equations follow directly from \eqref{HY-curvatureidentity} and \eqref{HY-movframe}. 
Now, in order to establish Codazzi-$2$ and Codazzi-$3$ equations, we write 
the equation \eqref{HY-codazzi} with $X=E_{(a)}$ and $Y=E_{(b)}$, i.e.,  
$$
\aligned
& \Riem(E_{(a)},E_{(b)})\ell 
\\
& = \utnabla_{E_{(a)}}(L_b^c E_{(c)}) - \utnabla_{E_{(b)}}(L_a^c E_{(c)}) - \left(\Gamma_{ab}^c-\Gamma_{ba}^c\right)L_c^d E_{(d)}
- M_a L_b^d E_{(d)} + M_b L_a^d E_{(d)} \\
& \quad + \left( E_{(a)}(M_b) - E_{(b)}(M_a) - \left(\Gamma_{ab}^c-\Gamma_{ba}^c\right) M_c - K_{ac}L_b^c + K_{bc}L_a^c \right) \ell.
\endaligned
$$
Since 
$$
\aligned
\utnabla_{E_{(a)}}(L_b^c E_{(c)}) & = E_{(a)}(L_b^c)\, E_{(c)} + L_b^c \Gamma_{ac}^d\, E_{(d)} 
=\left(\utnabla_{E_{(a)}} L_b^d + \Gamma_{ab}^c L_c^d\right) E_{(d)},\\
E_{(a)}(M_b)  & = \utnabla_a M_b + \Gamma_{ab}^c M_c,
\endaligned
$$
the previous equation becomes
$$
\aligned 
\Riem(E_{(a)},E_{(b)})\ell = 
& \left(\utnabla_a L_b^d -  \utnabla_b L_a^d  + L_a^d M_b - L_b^d M_a\right)E_{(d)} 
\\
& + \left( \utnabla_a M_b -  \utnabla_b M_a  + K_{bc} L_a^c - K_{ac} L_b^c\right)\ell.
\endaligned 
$$
Then the Codazzi-$2$ and Codazzi-$3$ equations are obtained by contracting this equation with $\Edual^{(d)}$ and $n$, respectively. 
\end{proof}

%--------------------------------------------------------------------------------------------------------------

\section{Geometry on a hypersurface induced by a metric} 
\label{CU-0}
 
\subsection{Rigging versus normal vector fields} 
\label{ss61}

From now on, we assume that the manifold $\MM$ is endowed with a Lorentzian metric $g$. Then, $g$ induces a unique connection 
$\nabla$ on $\MM$, called the Levi-Civita connection, such that  
$$
\nabla g =0, \qquad  \Tor=0. 
$$
Our aim is now to investigate the geometry induced by $g$ and $\nabla$ on a general hypersurface $\HH\subset \MM$.

We now specialize the results in Subsection~\ref{ss52} 
to Lorentzian manifolds, and discuss the properties of the projection operators on $T\HH$ and $T^*\HH$ 
Recall the following terminology for vectors $X\in T_x\MM$: 
$$
g(X,X) \, 
\begin{cases} 
\, < 0,   & \text{timelike,}
\\ 
\, = 0,   & \text{null,}
\\ 
\, > 0,   & \text{spacelike.}
\end{cases}
$$ 
From the normal form $n$, one can determine the {\sl normal vector field} $n^\sharp$ as the unique vector field in $T\MM$ satisfying
$$
g_x(n^\sharp_x,X) = \la n_x, X \ra, \qquad X \in T_x \MM,  \quad x \in \MM.
$$
It is important to observe that supplementing the subspace $T_x\HH\subset T_x\MM$ with the normal vector $n^\sharp_x$ does 
not always yield a canonical decomposition of the tangent space $T_x\MM$, since in the Lorentzian setting the normal vector 
may well belong to $T_x\HH$ (in that case the hypersurface is called \emph{null} at $x$). Precisely, by \eqref{HY.normalform}, 
the normal vector $n^\sharp_x$ at $x \in \HH$ belongs to $T_x\HH$ if and only if $n^\sharp_x$ is a null vector, that is $g(n^\sharp,n^\sharp) = 0$.  

Therefore the prescription of a rigging vector field $\ell$ (cf. Definition \ref{HY-rigging}) is necessary for general hypersurfaces. Specifically, 

\begin{enumerate}
\item 
If $\HH$ is a general hypersurface, then $T_x\HH$, $x\in\HH$, is supplemented with a rigging vector field $\ell$ as in Section \ref{ss52}. 

\item If $\HH$ is nowhere null, then one can choose the rigging vector field to be the normal vector field, that is $\ell=n^\sharp$. 
Note that the regularity of $n^\sharp$ depends on the metric, as follows:

\item 
$n^\sharp$ and $g^{\alpha\beta}$ have the same regularity: in particular, if $g_{\alpha\beta}$ is 
uniformly non-dege\-ne\-rate and  $g_{\alpha\beta}\in L^\infty_\loc\cap W^{k,p}(\MM)$, then  
$g^{\alpha\beta}$ and $n^\sharp$ also belong to $L^\infty_\loc\cap W^{k,p}(\MM)$. 

\end{enumerate}

In Subsections~\ref{ss62}-\ref{ss64} we study the geometry of a general hypersurface $\HH$ ($\ell\neq n^\sharp$) and in Subsection~\ref{ss65} 
we specialize the results to a nowhere null hypersurface ($\ell=n^\sharp$).

%------------------------------------------------------------------------------------------------------
 
\subsection{Metrics on $\HH$ induced by the metric $g$}
\label{ss62}

To the metric tensor $g$ (a $2$-covariant tensor) and to its inverse (a $2$-contravariant tensor),  
we associate their projections 
$$
\utg:=\utg_{ab}\, \Edual^{(a)}\otimes\Edual^{(b)}, 
\qquad
\tg:=\tg^{ab}\, E_{(a)}\otimes E_{(b)},
$$ 
whose components are given by 
$$
\utg_{ab}:=g_{\alpha\beta} \, E_{(a)}^\alpha E_{(b)}^\beta, 
\qquad 
\tg^{ab}:=g^{\alpha\beta}\Edual^{(a)}_\alpha \Edual^{(b)}_\beta.
$$  
The metric $\utg$ is simply the restriction of the original metric $g_{\alpha\beta}$ 
to the tangent space $T_x \HH$ and, as such, is independent of $\ell$, while $\ot g$ is 
the restriction of $g^{\alpha\beta}$ to the cotangent space $T^\star_x \HH$ and depends on $\ell$. 
Note that the matrix $\tg^{ab}$ is \emph{not} the inverse of the matrix $\utg_{ab}$.

Observe that the (possibly degenerate) $2$-covariant tensor $\utg_{ab}$ allows us to lower the indices of 
any vector field in $T\HH$, while the (possibly degenerate) $2$-contravariant tensor $\tg^{ab}$
allows us to raise the indices of any form in $T^\star\HH$. In particular, to the projections of a vector 
$X_x \in T_x\MM$ and a form $\theta_x \in T_x^\star \MM$ we can associate
the following form and vector, respectively, 
\be 
\tX_a := \utg_{ab} \, \tX^b, \qquad \uttheta^a := \tg^{ab} \, \uttheta_b. 
\label{HY-projection2} 
\ee
Alternatively, to the vector $X^\alpha$ and the form $\theta_\alpha$ we can first associate
the corresponding form $X_\alpha=g_{\alpha\beta} X^\beta$
and vectors $\theta^\alpha = g_{\alpha\beta} \, \theta_\beta$ and, next, project them 
to obtain the form $\utX_a$ and vector $\ot \theta^a$, respectively. 

We now investigate the properties of these projections.

\begin{theorem}[Projections of a Lorentzian metric and of vector fields] 
\label{degenerate}
Given a rigging field $\ell$, the two projections $\ut g_{ab}$ and $\ot g_{ab}$ of a Lorentzian metric 
$g:x \in \MM \to g_x \in T^0_{2,x}\MM$ satisfy the following properties: 
\begin{enumerate}
\item 
$\ut g_{ab}$ is degenerate at $x\in \HH$ if and only if $n^\sharp_x\in T_x \MM$ is a null vector.
\item 
$\ot g^{ab}$ is degenerate at $x\in \HH$ if and only if $\ell_x\in T_x \MM$ is a null vector.
\item In general, the projections of vectors do not commute with the operations of raising or lowering the indices, that is, 
for general vectors $X^\alpha$ and $1$-forms $\theta_\alpha$, 
\be 
\tX_a \neq \utX_a, \qquad \uttheta^a \neq \ttheta^a.  
\label{HY-projection3} 
\ee
\item The projections of the fields $\ell$ and $n$ satisfy 
\be
\tl^a =  0, \qquad  \utn_a = 0,  
\label{HY-projection4} 
\ee 
\be
\tn_a = -(n^\alpha n_\alpha) \, \utl_a, 
\qquad 
\tn^a \, \utl_a = 1 - (n^\alpha n_\alpha) \, (\ell^\beta \ell_\beta). 
\label{HY-projection5} 
\ee 
\end{enumerate} 
\end{theorem}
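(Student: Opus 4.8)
The statement decomposes into four independent claims, each of which reduces to a computation with the defining relations \eqref{HY.basis} for the adapted frame together with the formulas \eqref{HY-projection1} for the frame-components of projections. The guiding principle is that everything should be expressed in the dual frame pairing $\la E^{(\alpha)}, E_{(\beta)}\ra = \delta^\alpha_\beta$ and the orthonormality conditions relating $\{\ell, E_{(a)}\}$ to $\{n, \Edual^{(a)}\}$. I would dispatch the four parts in the order (4), (3), then the two degeneracy criteria (1) and (2), since the explicit projection identities in (4) are the cleanest and supply the vectors needed to exhibit the degeneracies in (1)--(2).

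For part (4), the vanishing relations $\tl^a = 0$ and $\utn_a = 0$ follow immediately from \eqref{HY-projection1}: indeed $\tl^a = \ell^\alpha \Edual^{(a)}_\alpha = 0$ and $\utn_a = n_\alpha E_{(a)}^\alpha = 0$ are exactly two of the orthogonality conditions in \eqref{HY.basis}. For the relations \eqref{HY-projection5} I would compute $\tn_a = \utg_{ab}\,\tn^b$ where $\tn^b = (n^\sharp)^\alpha \Edual^{(b)}_\alpha$, using $g(n^\sharp, \cdot) = \la n, \cdot\ra$ to rewrite inner products of $n^\sharp$ with frame vectors as pairings with $n$; expanding $\utl_a = g_{\alpha\beta}\ell^\beta E_{(a)}^\alpha$ and tracking the scalar $n^\alpha n_\alpha = g(n^\sharp, n^\sharp)$ should produce the stated proportionality. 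The pairing identity $\tn^a \utl_a = 1 - (n^\alpha n_\alpha)(\ell^\beta \ell_\beta)$ is then obtained by contracting, using $\la n, \ell\ra = 1$ from Definition~\ref{HY-rigging} and the completeness relation $\delta_\alpha^\beta = n_\alpha \ell^\beta + \Edual^{(a)}_\alpha E_{(a)}^\beta$ that follows from the basis being dual.

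For the degeneracy criteria I would argue as follows. The tensor $\utg_{ab} = g_{\alpha\beta} E_{(a)}^\alpha E_{(b)}^\beta$ is the restriction of $g$ to $T_x\HH$, so it is degenerate precisely when some nonzero $X \in T_x\HH$ satisfies $g(X, Y) = 0$ for all $Y \in T_x\HH$; since $g$ on $T_x\MM$ is non-degenerate, this forces $X$ to be $g$-orthogonal to all of $T_x\HH$, i.e.\ $X = \lambda\, n^\sharp$, which lies in $T_x\HH$ exactly when $n^\sharp$ is null (by the characterization recalled before the theorem, $n^\sharp_x \in T_x\HH \iff g(n^\sharp, n^\sharp) = 0$). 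Dually, $\tg^{ab}$ is the restriction of $g^{-1}$ to the annihilator-identified $T^\star_x\HH = \{\theta : \la\theta,\ell\ra = 0\}$ of \eqref{dualspace}, and the analogous argument with the inverse metric produces a degenerate direction exactly when $\ell$ is $g$-orthogonal to $T^\star_x\HH$ in the appropriate sense, i.e.\ when $\ell$ is null.

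For part (3), rather than a general computation I would simply exhibit a counterexample showing the two operations fail to commute unless additional structure is present: compute both $\tX_a = \utg_{ab}\tg^{bc}(\ldots)$ and $\utX_a = g_{\alpha\beta}X^\beta E_{(a)}^\alpha$ for a vector with a nonzero $\ell$-component, and observe via \eqref{HY-projection5} that the discrepancy is governed by the non-vanishing of $\tn_a$, i.e.\ by $n^\alpha n_\alpha \neq 0$. \textbf{The main obstacle} I anticipate is part (3): stating a clean inequality requires being careful that $\neq$ is meant generically (the two agree in degenerate special cases such as tangential vectors), so the real content is identifying the precise correction term rather than proving a blanket non-equality — and this is where the algebra of tracking the $\ell$- and $n$-components through the raise-then-project versus project-then-raise orderings is most delicate.
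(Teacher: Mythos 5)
Your proposal is correct and follows essentially the same route as the paper: the degeneracy criteria (1)--(2) are proved by identifying the degenerate direction with a multiple of $n^\sharp$ (resp.\ $\ell^\flat$) via kernel/orthogonal-complement arguments, and the identities \eqref{HY-projection5} are obtained from the orthogonality conditions \eqref{HY.basis} together with the completeness relation ${P_\alpha}^\beta = E_{(b)}^\beta \Edual^{(b)}_\alpha = \delta_\alpha^\beta - n_\alpha\ell^\beta$, exactly as in the paper's computation. The only difference is cosmetic: the paper leaves part (3) implicit, whereas you correctly observe that the relations of part (4) (e.g.\ $\tn_a = -(n^\alpha n_\alpha)\utl_a$ versus $\utn_a = 0$) already furnish the required counterexample when $n^\sharp$ is non-null.
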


\begin{proof}  
(1) By the definition of the normal vector $n^\sharp_x\in T_x \MM$, we have 
$$
g_x(n^\sharp_x, X_x)=0, \qquad X_x\in T_x \HH.
$$
If the vector $n^\sharp_x\in T_x \MM$ is null at $x\in \HH$, then $n^\sharp_x\in T_x \HH$ 
(see Section~\ref{PG-0}) and the relation above shows that $\ut g_x$ is degenerate 
($n^\sharp_x\neq 0$ by definition). 

Conversely, if $\ut g$ is degenerate at $x\in \HH$, then there exists a vector 
$Y_x\in T_x \HH\setminus \{0\}$ such that 
$$
\ut g_x(Y_x, X_x)=0, \qquad  X_x\in T_x \HH.
$$
This implies that $\ker Y^\flat_x=T_x \HH$, where $Y^\flat_x$ denotes the form 
associated with $Y_x$. Since on the other hand the normal form satisfies $\ker n_x=T_x \HH$, 
there exists a constant $C\neq 0$ such that $Y^\flat_x=C n_x$ or, equivalently, such that 
$Y_x=C n^\sharp_x$. Hence $n^\sharp_x\in T_x \HH$, which means that $n_x$ is a null vector (see Section~\ref{PG-0}).

(2) By the definition of the dual space $T^\star_x \HH$ (see \eqref{dualspace}), the $2$-contravariant tensor $g^{-1}$ 
defined by its components $g^{\alpha\beta}$ satisfies  
$$
g^{-1}_x(\ell^\flat_x, \theta_x)=\la \theta_x,\ell_x\ra=0, 
\qquad 
 \theta_x\in T^\star_x \HH.
$$
If the vector $\ell_x\in T_x \MM$ is null at $x\in \HH$, then $\la\ell^\flat_x,\ell_x\ra=0$ 
and therefore $\ell^\flat_x\in T^\star_x \HH$ by \eqref{dualspace}. Then the relation above 
shows that $\ot g_x$, which is the restriction of $g^\sharp_x$ to $T^\star_x\HH$, is degenerate ($\ell^\flat_x\neq 0$ 
by the definition of the rigging vector).

Conversely, if $\ot g$ is degenerate at $x\in \HH$, then there exists a form $\phi_x\in T^\star_x \HH\setminus \{0\}$ such that 
$$
\ot g_x(\phi_x, \theta_x)=0, \qquad  \theta_x\in T^\star_x \HH.
$$
This implies that the kernel of $\phi^\sharp_x:T^\star_x \MM \to \RR$ is $T^\star_x \HH$ ($\phi_x$ is defined 
over the whole space $T_x\MM$ by \eqref{dualspace}). But $T^\star_x \HH$ is also the kernel of 
 $\ell_x:T^\star_x \MM \to \RR$ (see \eqref{dualspace}). 
Therefore, there exists a constant $C\neq 0$ such that $\phi^\sharp_x=C\ell_x$.  
This implies that $\ell^\flat_x\in T^\star_x \HH$, which in turn yields that $\ell_x$ is a null vector 
(see \eqref{dualspace}).

It remains to prove \eqref{HY-projection5}. Using the definitions above, we first have 
$$
\tn_a=\utg_{ab}\,\tn^b=(g_{\sigma\beta}\,E^\sigma_{(a)} E^\beta_{(b)}) (\Edual^{(b)}_\alpha n^\alpha).
$$
But, ${P_\alpha}^\beta = E^\beta_{(b)} \, \Edual^{(b)}_\alpha$, since 
$$
\aligned 
{P_\alpha}^\beta \, X^\alpha = \tX^\beta 
& = E^\beta_{(b)} \, \tX^b 
\\
& = E^\beta_{(b)} \, \Edual^{(b)}_\alpha \, X^\alpha
\endaligned 
$$
for all $X \in T\MM$. Hence, 
$$
\tn_a 
=g_{\sigma\beta}\,E^\sigma_{(a)} {P_\alpha}^\beta n^\alpha 
=g_{\sigma\beta}\,E^\sigma_{(a)} (n^\beta-\ell^\beta n_\alpha) n^\alpha
=-E^\sigma_{(a)} \ell_\sigma n_\alpha n^\alpha = -(n^\alpha n_\alpha) \, \utl_a
$$ 
and the first equation in \eqref{HY-projection5} is established. The second one is obtained 
by computing
$$
\aligned 
\tn^a \, \utl_a =\la\tn,\utl\ra
& =\tn^\alpha \, \utl_\alpha
\\
& = (n^\alpha-\ell^\alpha n_\beta n^\beta)(\ell_\alpha-n_\alpha \ell^\beta \ell_\beta)
\\
& =1 - (n^\alpha n_\alpha) \, (\ell^\beta \ell_\beta).
\endaligned
$$
\end{proof}

%----------------------------------------------------------------------------------------------------
 % 6.3 
\subsection{Connections on $\HH$ induced by the metric $g$}  
\label{ss63}

The natural connection on $\HH$ induced by the metric $g$ is the Levi-Civita connection associated with the projected metric $\utg$ 
(the pull-back of the metric $g$). However this is not possible in general because $\utg$ is degenerate at the points where
 the hypersurface $\HH$ is null. This leads us to follow one of the following two strategies: 
\begin{enumerate}
\item 
Either define the Levi-Civita connection $\otnabla$ associated with the metric $\otg$ defined on the cotangent bundle $T^*\HH$. 

\item 
Or define the connection $\utnabla$ by projecting the connection $\nabla$ ($\utnabla$ is \emph{not} the Levi-Civita connection 
associated with the metric $\utg$ defined on the tangent bundle $T^*\HH$, save for $\ell=n^\sharp$).
\end{enumerate}

We first define the connection $\otnabla$ which requires the additional assumption that
\be
\text{the rigging vector $\ell$ is no-where null on $\HH$.}
\label{HY.assumption}
\ee
Under this assumption, Theorem~\ref{degenerate} shows that $\otg^{ab}$ is a non-degenerate tensor on $\HH$, 
and we can introduce its inverse 
$$
\big(\gamma_{ab} \big) := \big( \ot g^{ab}\big)^{-1}. 
$$
Then, the connection $\otnabla$ is defined as the unique Levi-Civita 
connection associated with the (non-degenerate) metric tensor $\gamma_{ab}$. 
This connection induced by projecting the dual of the metric $g$ will be referred to as 
the {\sl metric connection.} 

Next, we define the connection $\utnabla$ by the formula \eqref{HY-derivative}
where $\nabla$ is the Levi-Cevita connection associated with the given metric $g$. 
In particular, $\utnabla$ satisfies the properties stated in Proposition \ref{HY-connection}. 

The following Proposition gather the principal properties of the connections $\otnabla$ and $\utnabla$: 

\begin{proposition} 

1. The operator  $\otnabla$ is a metric connection (by construction) and, in particular, has zero torsion. 

2. The operator $\utnabla$ need not be a metric connection, that is, it need 
% pas de "s" ici
not be the Levi-Civita 
connection associated with a non-degenerate metric. 
In general, 
$$
\utnabla \utg \neq 0
$$
with the notable exception when $\ell$ is chosen to be $n^\sharp$ (for non-null hypersurfaces).  
Still, $\utnabla$ has always zero torsion, that is, 
$$
\utnabla_X Y - \utnabla_Y X - [X,Y]=0, \qquad X,Y \in \TT^1_0(\HH).  
$$

3. We have 
$$
\otnabla  = \utnabla  + F, 
$$
where $F:\TT^1_0\HH\times\TT^1_0\HH\to \Dcal'T^1_0\HH$ is the $(1,2)$-tensor field defined by 
$$
\otg^{-1}(F(X,Y),Z):=\frac12\left(
(\utnabla_X\otg^{-1})(Y,Z)+
(\utnabla_Y\otg^{-1})(X,Z)-
(\utnabla_Z\otg^{-1})(X,Y)
\right)
$$
for all $X,Y,Z\in \TT^1_0\HH$. Note that the tensor field $F$ depends on $\ell$ and that $F\neq 0$, 
except in the case that $\ell=n^\sharp$ (for non-null hypersurfaces).
\end{proposition}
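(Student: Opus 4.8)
The plan is to dispatch the three assertions in turn, reducing each to results already established in the excerpt. For the first assertion there is essentially nothing to compute: under the standing assumption \eqref{HY.assumption} that $\ell$ is nowhere null, Theorem~\ref{degenerate}(2) guarantees that $\otg^{ab}$ is a non-degenerate $(2,0)$-tensor on $\HH$, so its inverse $\gamma_{ab}=(\otg^{ab})^{-1}$ is a genuine (possibly indefinite) metric. The fundamental theorem of pseudo-Riemannian geometry then furnishes a unique torsion-free, metric-compatible connection for $\gamma_{ab}$, which is by definition $\otnabla$; in particular $\otnabla$ has zero torsion.

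For the second assertion, the vanishing of the torsion of $\utnabla$ is exactly Proposition~\ref{HY-connection}(3) applied to the Levi-Civita connection $\nabla$, which itself has zero torsion. To see that $\utnabla$ need not be metric, I would compute $\utnabla\utg$ directly: writing $(\utnabla_X\utg)(Y,Z)=X(\utg(Y,Z))-\utg(\utnabla_X Y,Z)-\utg(Y,\utnabla_X Z)$ for $X,Y,Z\in\TT^1_0(\HH)$, using $\utg(Y,Z)=g(Y,Z)$ on $T\HH$ together with $\utnabla_X Y=\nabla_X Y+K(X,Y)\,\ell$ from Lemma~\ref{HY-frame}, and finally invoking $\nabla g=0$, the metric terms cancel and one is left with
$$
(\utnabla_X\utg)(Y,Z)=-K(X,Y)\,g(\ell,Z)-K(X,Z)\,g(\ell,Y).
$$
Since $g(n^\sharp,Z)=\la n,Z\ra=0$ for $Z\in T\HH$, this vanishes for all tangent $X,Y,Z$ when $g(\ell,\cdot)$ annihilates $T\HH$, i.e.\ when $\ell$ is $g$-orthogonal to the hypersurface, which on a non-null $\HH$ singles out $\ell=n^\sharp$; for a generic rigging (and non-degenerate $K$) it is non-zero.

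For the third assertion I would argue as in the classical comparison of two connections. The difference $F:=\otnabla-\utnabla$ is $\Ccal^\infty$-bilinear, the non-tensorial Leibnitz terms $X(\lam)\,Y$ cancelling, hence $F$ is a $(1,2)$-tensor field; and since both $\otnabla$ and $\utnabla$ are torsion-free, $F$ is symmetric, $F(X,Y)=F(Y,X)$. Expanding the metric compatibility $\otnabla\gamma=0$ through $\otnabla=\utnabla+F$ yields
$$
(\utnabla_X\gamma)(Y,Z)=\gamma(F(X,Y),Z)+\gamma(F(X,Z),Y).
$$
Writing this identity for the cyclic permutations of $(X,Y,Z)$ and forming the combination (first) $+$ (second) $-$ (third), the symmetry of $F$ in its two lower slots collapses four of the six terms and leaves exactly $\gamma(F(X,Y),Z)$ equal to the asserted right-hand side; non-degeneracy of $\gamma$ (again from \eqref{HY.assumption} and Theorem~\ref{degenerate}) shows that this determines $F$ uniquely, establishing the dual-Koszul formula $\otg^{-1}(F(X,Y),Z)=\tfrac12(\ldots)$. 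Finally $F=0$ iff $\utnabla\gamma=0$; when $\ell=n^\sharp$ on a non-null hypersurface one checks $\gamma=\utg$, so the computation of part~(2) gives $\utnabla\gamma=\utnabla\utg=0$ and hence $F=0$, whereas for any other rigging $F$ does not vanish.

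The main obstacle I anticipate is the bookkeeping linking parts (2) and (3): namely verifying that $\gamma=\utg$ precisely when $\ell=n^\sharp$ on a non-null $\HH$. This is where the $g$-orthogonal splitting $T\MM=\Vect(n^\sharp)\oplus T\HH$ makes $\otg^{ab}$ the genuine inverse of $\utg_{ab}$ — in contrast to the general warning that these two matrices are \emph{not} inverse to one another — so that the non-metricity formula for $\utg$ can legitimately be transferred to $\gamma$ and used to characterise exactly when the contorsion tensor $F$ vanishes. The remaining cyclic-permutation manipulation is routine.
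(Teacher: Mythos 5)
Your proof is correct. On the core assertion (part 3) it is essentially the paper's own computation run in the opposite direction: the paper starts from the Koszul formula for $\otnabla$ relative to $\gamma:=(\otg^{ab})^{-1}$, substitutes $X(\gamma(Y,Z))=(\utnabla_X\gamma)(Y,Z)+\gamma(\utnabla_XY,Z)+\gamma(Y,\utnabla_XZ)$, and cancels the resulting torsion terms of $\utnabla$, whereas you introduce the difference tensor $F:=\otnabla-\utnabla$, note that it is symmetric because both torsions vanish, and recover the defining formula by the cyclic $(+,+,-)$ combination of the identity obtained from $\otnabla\gamma=0$; these are term-for-term the same manipulation, and both close by invoking the non-degeneracy of $\gamma$ guaranteed by Theorem~\ref{degenerate} under the assumption \eqref{HY.assumption}. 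Where you add genuine content is in part 2, which the paper dispatches with ``we only need to prove the last assertion'': your identity $(\utnabla_X\utg)(Y,Z)=-K(X,Y)\,g(\ell,Z)-K(X,Z)\,g(\ell,Y)$, obtained from Lemma~\ref{HY-frame} and $\nabla g=0$, is correct and makes the vague claim $\utnabla\utg\neq 0$ precise: the non-metricity vanishes exactly when $g(\ell,\cdot)$ annihilates $T\HH$, i.e.\ when $\ell$ is proportional to $n^\sharp$, which requires $\HH$ non-null. Your bridge between parts 2 and 3 --- that $\gamma=\utg$ when $\ell=n^\sharp$, so that the part-2 computation also forces $F=0$ --- is likewise sound: in an adapted frame one has $\utg_{ab}\,\otg^{bc}=\delta_a^c-g(E_{(a)},\ell)\,\la E^{(c)},n^\sharp\ra$, and the second term dies precisely for this choice of rigging, so $\otg^{ab}$ is then genuinely the inverse of $\utg_{ab}$ (in contrast with the general warning in Subsection~\ref{ss62}); the paper never proves this and only asserts the coincidence of the two connections later, in Subsection~\ref{ss65}. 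In short: same route on the main point, and a more complete treatment of the subsidiary claims than the paper's own proof.
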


\begin{proof}
We only need to prove the last assertion. Since the connection $\otnabla$ satisfies the Koszul formula ($\gamma:=\otg^{-1}$)
$$
\aligned
2 \, \gamma(\otnabla_XY,Z)=& X(\gamma(Y,Z))+Y(\gamma(X,Z))-Z(\gamma(X,Y))\\
& -\gamma(X,[Y,Z])-\gamma(Y,[X,Z])+\gamma(Z,[X,Y]),
\endaligned
$$
and the connection $\utnabla$ satisfies
$$
(\utnabla_X\gamma)(Y,Z) = X(\gamma(Y,Z)-\gamma(\utnabla_XY,Z)-\gamma(Y,\utnabla_XZ),
$$
we deduce that 
$$
\aligned
2 \, \gamma(\otnabla_XY,Z)
=& 
(\utnabla_X\gamma)(Y,Z)+(\utnabla_Y\gamma)(X,Z)-(\utnabla_Z\gamma)(X,Y) + 2\,\gamma(\utnabla_XY,Z)\\
& 
-\gamma(\utnabla_XY-\utnabla_YX-[X,Y],Z)
+\gamma(\utnabla_XZ-\utnabla_ZX-[X,Z],Y)\\
&
+\gamma(\utnabla_YZ-\utnabla_ZY-[Y,Z],X).
\endaligned
$$
But the connection $\utnabla$ has zero torsion and thus the above formula reduces to 
$$
2 \, \gamma(\otnabla_XY,Z)=2\gamma(F(X,Y),Z)+2\,\gamma(\utnabla_XY,Z),
$$
where $F$ is the tensor defined in the statement of the Proposition. The proof is completed.
\end{proof}

%----------------------------------------------------------------------------------------------------
 % 6.4
\subsection{Gauss and Codazzi equations} 
\label{ss64}
The Riemann curvature tensors defined on the hypersurface $\HH$ by the connections $\utnabla$ and $\otnabla$ are related 
to one another via the tensor $F$ and its covariant derivative $\nabla F$, since $\otnabla  = \utnabla  + F$. Moreover, 
we have seen in Subsection~\ref{ss56} that the Riemann curvature tensor associated with $\utnabla$ and the Riemann curvature tensor 
associated with $\nabla$ are related by the Gauss and Codazzi equations given by Proposition~\ref{55} 
(the assumptions of this Proposition are clearly satisfied). 

We now take advantage of the fact that $\nabla$ is a metric connection (i.e., $\nabla$ is the Levi-Civita connection 
associated with the metric $g$) and establish further properties of the geometry of $\HH$. It is well known that 
the $4$-covariant Riemann tensor defined by 
$$
\Riem(W,Z,X,Y) := g(W,\Riem(X,Y)Z) 
$$
satisfies the symmetries
\be
\aligned 
\Riem(W,Z,X,Y) & = \Riem(X,Y,W,Z)
\\
& = - \Riem(W,Z,Y,X) = -\Riem(Z,W,X,Y), 
\endaligned 
\label{HY-symmetries} 
\ee
as well as the Bianchi identities
\be
\aligned
\Riem(W,Z,X,Y) + \Riem(W,X,Y,Z) + \Riem(W,Y,Z,X) & =0,\\
\nabla_X\Riem(W,U,Y,Z) + \nabla_Y\Riem(W,U,Z,X) + \nabla_Z\Riem(W,U,X,Y) & =0,
\endaligned
\label{HY-bianchi1}
\ee
for all $X,Y,Z,U,W\in \TT^1_0(\MM)$. This implies 
that the left-hand sides of the Gauss and Codazzi equations satisfies the above symmetry relations. 
Consequently, their right-hand sides must also satisfy these symmetry relations. 
In view of the relations \eqref{HY-Gauss}-\eqref{HY-Codazzi3}, this yields 
the compatibility relations that $\undertilde R_{abc}^d$, $K_{ab}$, $L_a^c$ and $M_a$ must satisfy.

%-------------------------------------------------------------------------------------------------------------

\subsection{The case of nowhere null hypersurfaces}
\label{ss65}

If the hypersurface $\HH$ is no-where null, i.e., the normal vector $n_x^\sharp$ is not null at 
any point of $x \in \HH$, then we can choose the particular rigging vector 
$$
\ell=n^\sharp.
$$ 
With this choice, the following properties hold: 

\begin{enumerate}
\item The projection operators do commute with the operations of raising or lowering the indices, that is, 
for general vectors $X^\alpha$ and $1$-forms $\theta_\alpha$, 
$$
\tX_a = \utX_a, \qquad \uttheta^a = \ttheta^a.  
$$
\item The two connections defined in Section \ref{ss63} coincide, 
$$
\ut \nabla = \ot \nabla, 
$$
and are nothing but the Levi-Civita connection associated with the metric $\ut g_{ab}$. 
\item The coefficients appearing in the equations \eqref{HY-movframe} satisfy 
$$
L_a^c=K_{ab} g^{bc}, \qquad  M_a=0.
$$
This is consequence of the relations
$$
\aligned
0 & =\nabla_{E_{(a)}}(g(E_{(b)},n^\sharp))=g(\nabla_{E_{(a)}}E_{(b)},n^\sharp)+g(E_{(b)},\nabla_{E_{(a)}} n^\sharp)\\
& =-K_{ab}+\ut g_{bc} L_a^c
\endaligned
$$
and 
$$
0=\nabla_{E_{(a)}}(g(n^\sharp,n^\sharp))=2g(\nabla_{E_{(a)}}n^\sharp,n^\sharp)=2M_a, 
$$
themselves following from the fact that the connection on $\MM$ 
satisfies $\nabla g=0$ combined with the orthogonality between the normal vector $n^\sharp$ and the vectors fields 
$E_{(c)}\in T\HH$ (i.e., $g(n^\sharp,E_{(c)})=0$). 
 
\item The Gauss and Codazzi equations \eqref{HY-Gauss}--\eqref{HY-Codazzi3} reduce to the usual Gauss and 
Codazzi equations associated with a hypersurface. 
Indeed, using $\ell=n^\sharp$ one can see 
 that the Codazzi-$2$ equations are equivalent to the 
 Codazzi-$1$ equations and that the Codazzi-$3$ equations vanish identically.
\end{enumerate}

%-----------------------------------------------------------------------------------------------------------

\section*{Acknowledgments}  

The first author (PLF) is very thankful to J.M. Stewart (DAMTP, Cambridge University) for 
many enlightening discussions
on the subject of this paper, as well as to 
the organizers (P.T. Chrusciel, H. Friedrichs, P. Tod) of the Semester Program
``Global Problems in Mathematical Relativity'' which took place at the Isaac Newton Institute of Mathematical Sciences 
(Cambridge, UK) where this research was initiated. 
This research was partially supported by the A.N.R. (Agence Nationale de la Recherche)
through the grant 06-2-134423 entitled {\sl ``Mathematical Methods in General Relativity''} (MATH-GR), 
and by the Centre National de la Recherche Scientifique (CNRS).

%-------------------------------------------------------------------------------------------------------------

\end{document}